\documentclass[11pt]{article}

\usepackage{amsthm}
\usepackage{graphicx} % support the \includegraphics command and options
\usepackage{array} % for better arrays (eg matrices) in maths

\usepackage[bb=boondox]{mathalpha}

\usepackage{amsmath, amssymb, amsfonts, verbatim, soul}
\usepackage{hyphenat,epsfig,subcaption,multirow}
\usepackage[font=small,labelfont=bf]{caption}
\usepackage{nicefrac}
\usepackage{paralist}

\usepackage[usenames,dvipsnames]{xcolor}
\usepackage[ruled]{algorithm2e}

\DeclareFontFamily{U}{mathx}{\hyphenchar\font45}
\DeclareFontShape{U}{mathx}{m}{n}{
      <5> <6> <7> <8> <9> <10>
      <10.95> <12> <14.4> <17.28> <20.74> <24.88>
      mathx10
      }{}
\DeclareSymbolFont{mathx}{U}{mathx}{m}{n}
\DeclareMathSymbol{\bigtimes}{1}{mathx}{"91}

\usepackage{tcolorbox}
\tcbuselibrary{skins,breakable}
\tcbset{enhanced jigsaw}

\usepackage[normalem]{ulem}
\usepackage[compact]{titlesec}

\definecolor{DarkRed}{rgb}{0.5,0.1,0.1}
\definecolor{DarkBlue}{rgb}{0.1,0.1,0.5}

\usepackage{nameref}
\definecolor{ForestGreen}{rgb}{0.1333,0.5451,0.1333}
\definecolor{Red}{rgb}{0.9,0,0}
\usepackage[linktocpage=true,
	pagebackref=true,colorlinks,
	linkcolor=DarkRed,citecolor=ForestGreen,
	bookmarks,bookmarksopen,bookmarksnumbered]
	{hyperref}
\usepackage[noabbrev,nameinlink]{cleveref}
\crefname{property}{property}{Property}
\creflabelformat{property}{(#1)#2#3}
\crefname{equation}{eq}{Eq}
\creflabelformat{equation}{(#1)#2#3}
\usepackage{bm}
\usepackage{url}
\usepackage{xspace}
\usepackage[mathscr]{euscript}

\usepackage{tikz}
\usetikzlibrary{arrows}
\usetikzlibrary{arrows.meta}
\usetikzlibrary{shapes}
\usetikzlibrary{backgrounds}
\usetikzlibrary{positioning}
\usetikzlibrary{decorations.markings}
\usetikzlibrary{decorations.pathreplacing} % needed for drawing braces
\usetikzlibrary{patterns}
\usetikzlibrary{calc}
\usetikzlibrary{fit}
\usetikzlibrary{decorations}

\usepackage[framemethod=TikZ]{mdframed}

\usepackage[noend]{algpseudocode}
\makeatletter
\def\BState{\State\hskip-\ALG@thistlm}
\makeatother

\usepackage{cite}
\usepackage{enumitem}
\setlist[itemize]{leftmargin=20pt}
\setlist[enumerate]{leftmargin=20pt}

\usepackage[margin=1in]{geometry}

\usepackage{thmtools}
\usepackage{thm-restate}

\newtheorem{theorem}{Theorem}

\newtheorem{lemma}{Lemma}[section]

\newtheorem{corollary}[lemma]{Corollary}
\newtheorem{claim}[lemma]{Claim}
\newtheorem{fact}[lemma]{Fact}

\newtheorem{definition}[lemma]{Definition}

\newtheorem*{claim*}{Claim}
\newtheorem*{assumption*}{Assumption}
\newtheorem*{proposition*}{Proposition}
\newtheorem*{lemma*}{Lemma}

\newtheorem*{observation*}{Observation}

\newtheorem*{theorem*}{Theorem}

\crefname{lemma}{Lemma}{Lemmas}
\crefname{claim}{claim}{claims}
\crefname{property}{Property}{Properties}
\crefname{invariant}{Invariant}{Invariants}

\newtheorem{mdresult}[theorem]{Theorem}
\newenvironment{Theorem}{\begin{mdframed}[backgroundcolor=lightgray!40,topline=false,rightline=false,leftline=false,bottomline=false,innertopmargin=2pt]\begin{mdresult}}{\end{mdresult}\end{mdframed}}

\newtheorem{remark}[lemma]{Remark}

\theoremstyle{definition}

\newenvironment{ourbox}{\begin{mdframed}[hidealllines=false,innerleftmargin=10pt,backgroundcolor=white!10,innertopmargin=10pt,innerbottommargin=5pt,roundcorner=10pt]}{\end{mdframed}}

\newenvironment{cbox}{\begin{mdframed}[backgroundcolor=ForestGreen!10,topline=false,bottomline=false, innerbottommargin=5pt,innertopmargin=5pt]}{\end{mdframed}}

\newtheorem{Definition}[lemma]{Definition}
\renewenvironment{definition}{\begin{cbox}\begin{Definition}}{\end{Definition}\end{cbox}}

\newtheorem{mdalgorithm}{Algorithm}

\allowdisplaybreaks

\renewcommand{\qed}{\nobreak \ifvmode \relax \else
      \ifdim\lastskip<1.5em \hskip-\lastskip
      \hskip1.5em plus0em minus0.5em \fi \nobreak
      \vrule height0.75em width0.5em depth0.25em\fi}

\newcommand{\Qed}[1]{\rlap{\qed$_{\textnormal{~~\Cref{#1}}}$}}

\newcommand*\samethanks[1][\value{footnote}]{\footnotemark[#1]}

\renewcommand{\leq}{\leqslant}
\renewcommand{\geq}{\geqslant}

\newcommand{\Ot}{\ensuremath{\widetilde{O}}}

\newcommand{\Paren}[1]{\Big(#1\Big)}

\newcommand{\paren}[1]{\ensuremath{\left(#1\right)}\xspace}
\newcommand{\card}[1]{\left\vert{#1}\right\vert}

\newcommand{\set}[1]{\ensuremath{\left\{ #1 \right\}}}
\newcommand{\poly}{\mbox{\rm poly}}

\DeclareMathOperator*{\Prob}{\ensuremath{\textnormal{Pr}}}
\renewcommand{\Pr}{\Prob}

\newenvironment{tbox}{\begin{tcolorbox}[
		enlarge top by=5pt,
		enlarge bottom by=5pt,
		 breakable,
		 boxsep=0pt,
                  left=4pt,
                  right=4pt,
                  top=10pt,
                  arc=0pt,
                  boxrule=1pt,toprule=1pt,
                  colback=white
                  ]%%
	}
{\end{tcolorbox}}

\newcommand{\II}{\ensuremath{\mathbf{I}}}

\newcommand{\mireal}[1][]{
  \ifx\relax#1\relax%
    \II(\mione \,; \mitwo)%
  \else%
    \II(\mione \,; \mitwo\mid #1)%
  \fi
}

\newcommand{\conc}{\circ}

\newcommand{\bG}{\mathbb{G}}

\newcommand{\bE}{\mathbb{E}}
\newcommand{\bL}{\mathbb{L}}
\newcommand{\bR}{\mathbb{R}}

\newcommand{\be}{\mathbb{e}}

\newcommand{\bEE}[1]{\ensuremath{\bE}^{(#1)}}

\newcommand{\bX}{\mathbb{X}}

\newcommand{\bM}{\mathbb{M}}

\newcommand{\indextext}[1]{\ensuremath{\textnormal{\texttt{#1}}}\xspace}
\newcommand{\pstart}{\indextext{start}}
\newcommand{\pdelay}{\indextext{delay}}
\newcommand{\pwalk}{\indextext{walk}}

\newcommand{\barW}{\overline{W}}

\title{Semi-Streaming Matching in a Single Pass II: \\ Greedy is Optimal} 
\author{Sepehr Assadi\footnote{(sepehr@assadi.info, max.jiang@uwaterloo.ca, mxiang@uwaterloo.ca) School of Computer Science, University of Waterloo. Supported in part by an NSERC
Discovery Grant, a Faculty of Math Research Chair grant, and an Undergraduate Research Fellowship (URF) from University of Waterloo. \newline  Part of this work was done while Mars Xiang was
at The Interaction Company of California. \smallskip} 
\and Max Jiang\samethanks  \and Mars Xiang\samethanks  
}

\date{}

\begin{document}
\maketitle

% !TeX root = main.tex 
%!TEX root = main.tex

\begin{abstract}

\medskip

We prove that no single-pass semi-streaming algorithm (deterministic or randomized) can achieve a better-than-half approximation 
to the maximum matching problem. This implies the optimality of the naive greedy algorithm, answering an outstanding open question in the graph streaming literature since
the introduction of the model over two decades ago.

\medskip

Our proof follows the ``blueprint framework'' introduced previously by the authors, which reduced proving lower bounds for semi-streaming matching
to constructing certain combinatorial objects called blueprints. 
We present an optimal construction of blueprints that when used in this framework implies our semi-streaming matching lower bound. 

\medskip

Our results also imply that the optimal competitive ratio of online matching with preemption is half, again matching the naive greedy algorithm, settling this open question as well. 

\medskip

\end{abstract}

\pagenumbering{roman}
\clearpage

%%\vspace{-15pt}

\setcounter{tocdepth}{2}
\tableofcontents
\clearpage

\pagenumbering{arabic}
\setcounter{page}{1}

\clearpage

% !TeX root = main.tex 
%!TEX root = main.tex

\newcommand{\bLstar}{\bL^{\star}}
\newcommand{\bRstar}{\bR^{\star}}

\section{Introduction}\label{sec:intro} 

In the semi-streaming model for graph problems, formalized by~\cite{FeigenbaumKMSZ05}, we have an $n$-vertex graph $G=(V,E)$ whose edges are given
to the algorithm in some arbitrarily ordered stream; the algorithm makes one or a few passes over the stream, uses $\Ot(n):= O(n \cdot \poly\!\log{(n)})$ memory, and at the end of the stream, 
outputs a solution to the problem at hand. In this paper, we study \textbf{single-pass} semi-streaming algorithms for the \textbf{maximum matching} problem. 

Alongside the introduction of the model,~\cite{FeigenbaumKMSZ05} observed a simple $(1/2)$-approximation for this problem: greedily maintain a maximal matching of the arriving edges. 
The two decades since then have witnessed a high throughput of results on the semi-streaming matching problem, and yet, this simple algorithm remains the state of the art. A handful of 
impossibility results have also been proven over the years, ruling out approximation ratios of $(2/3)$~\cite{GoelKK12}, $(1-1/e) \sim 0.632$~\cite{Kapralov13}, $(1+\ln{(2)})^{-1} \sim 0.590$~\cite{Kapralov21}, 
and most recently $(8-2\sqrt{10})/3 \sim 0.558$~\cite{AssadiJX26}. 

In light of this state of affairs, whether the half-approximation of the greedy algorithm can be improved has been an outstanding open question since the introduction of the model~\cite{FeigenbaumKMSZ05,KonradMM12,sublinear_open_60,McGregor14,KaleT17,KonradN21}. 
Over the years, this question has been commonly referred to as ``major''~\cite{Tirodkar18,GamlathKMS19,GargKRS20,FischerMU22}, ``central''~\cite{AssadiBKL23,DerakhshanGR25}, ``key''~\cite{AlexandruDKN23}, ``main''~\cite{Bernstein24}, one of the ``most longstanding''~\cite{Behnezhad21,AssadiB21,DerakhshanGR25} or ``most appealing''~\cite{EsfandiariHM16,CormodeJMM17}, ``perhaps the most well-known''~\cite{Gamlath21}, and ``the most basic''~\cite{FeldmanS22} of open questions in the area. It has also been more affectionally called ``vexing''~\cite{ChakrabartiK14},``baffling''~\cite{KaleT17}, ``most tantalizing''~\cite{Wajc21}, ``notorious''~\cite{FeldmanNSZ22}, and ``elusive''~\cite{naidu2024streaming}, among others.

We fully resolve this question in our work. 

\begin{Theorem}\label{thm:main}
	There is no single-pass semi-streaming algorithm for the maximum bipartite matching problem that can achieve any constant approximation ratio strictly better than half with constant probability. 
\end{Theorem}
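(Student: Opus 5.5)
The plan is to follow the blueprint framework of our prior work~\cite{AssadiJX26}, treating its main reduction as a black box. That reduction says, roughly, the following. Suppose one can construct, for every small $\varepsilon>0$, a \emph{blueprint}: a bipartite graph on $N$ vertices with a sequence of induced-matching-like layers (Ruzsa--Szemer\'edi style) of size $\Omega(N)$, together with a designated set of ``special'' edges. The blueprint must satisfy an approximation gap. Any matching that uses few edges from the hidden (unrevealed) parts of the layers has size at most $(1/2+\varepsilon)\cdot\mu$. In contrast, for every choice of a hidden layer, the graph obtained by completing that layer (together with the edges revealed at the end of the stream) has a matching of size $\mu$. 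Given this, the framework's information-theoretic argument — a multi-round distributional communication game in which the hidden layer is embedded among many RS-graph copies — rules out $(1/2+\varepsilon)$-approximation in $n^{1+o(1)}$, and in particular $\Ot(n)$, space. So the whole task is to build blueprints whose ratio tends to $1/2$, rather than the $(8-2\sqrt{10})/3$ achieved before.

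The construction proceeds in three steps.

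First, build the gadget as a long layered ``path'' structure of depth $k=\Theta(1/\varepsilon)$. Vertex groups $L_1,R_1,L_2,R_2,\dots$ are arranged so that each layer $i$ is a matching between consecutive groups, streamed in order. The last layer is chosen adversarially after the algorithm commits, pinning the final vertices.

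Second, use the adversary's adaptivity. At each layer, the algorithm has only sparse information about the previous layers' hidden edges, so from its view it must keep, for most vertices, only "greedy-like" edges. The blueprint is designed so that the true optimum uses the alternate parity of edges along long augmenting-path-like chains. Each chain then contributes roughly twice as many optimal edges as the edges retainable by the algorithm, up to $O(1/k)$ boundary losses at the chain endpoints.

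Third, verify the two blueprint properties combinatorially. The upper bound on low-information matchings is proved via a fractional/LP-duality argument: exhibit a vertex cover of the "visible" graph of size $(1/2+O(1/k))\mu$. The lower bound is proved by explicitly exhibiting the perfect-ish matching for every choice of hidden layer.

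The main obstacle is the upper bound. We must show that \emph{every} matching using few hidden edges is small, not just greedy ones, which means exhibiting a tight dual certificate uniformly over all adversary choices. Earlier constructions lost exactly here, since their boundary effects did not vanish. I would first try a recursive/self-similar blueprint in which each level's loss telescopes, so the total slack is $O(1/k)$. I would then check that the RS-graph density needed at depth $k$ still fits within the framework's parameters, i.e., that $k$ constant in $n$ suffices.
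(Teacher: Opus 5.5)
Your proposal has a genuine gap: it never constructs anything, and the object it aims at is not the one the black box takes as input. \Cref{thm:framework} consumes a \emph{proper blueprint} in the sense of \Cref{def:proper-blueprint}. That is a bipartite graph on $\bL,\bR$ indexed by $[C]^P$, whose edges are split into $P$ groups and satisfy the matching constraint and the ban constraints. From it the theorem outputs the bound $\alpha = (2-2\,value(\bG))/(2-value(\bG))$.

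So the real target is a family of proper blueprints with value tending to $2/3$, which gives $\alpha \to 1/2$. It is not an RS-layered instance with a hidden layer and a $(1/2+\varepsilon)$-versus-$1$ gap. If you built the object you describe, you could not cite the framework; you would have to redo the whole communication argument yourself.

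More seriously, none of your three steps specifies a graph. The ``layered path structure'', the ``augmenting-path-like chains'' and the ``recursive/self-similar blueprint whose loss telescopes'' are aspirations. The property you most need, that \emph{every} low-information matching has size at most $(1/2+O(1/k))\mu$ uniformly over adversary choices, is stated as a goal you would try to reach. Your second step's heuristic, that the algorithm `must keep greedy-like edges', is the unproven heart of the problem. A randomized algorithm can hedge by retaining a fraction of every layer. This is exactly where earlier constructions lost, yielding $2/3$, $1-1/e$, $(1+\ln 2)^{-1}$ and $\approx 0.558$, and nothing in your plan explains why the boundary losses would now vanish.

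The missing ingredient is a concrete blueprint. The paper's is built from gambler's-ruin walks, not chains of layers.
\begin{itemize}
\item A label $x \in [C]^P$ encodes a walk $W(x)$ on $\{0,\dots,2m\}$. Here $x_1$ picks a start state from a specific distribution $\mu$ on $[m]$, $x_2$ picks a uniform delay in $[D]$, and the remaining coordinates give the $\pm 1$ steps. The right side uses the reflected walk $\barW(x)$.
\item The group $\bEE{t+2}$ holds a matching between two sets. On the left are vertices whose walk is absorbed at $0$ with maximum $j$ first reached at time $t$. On the right are vertices whose reflected walk is absorbed at $0$ with maximum $2m-j$ first reached at time $t$.
\item The ban constraint is then immediate (\Cref{clm:bG-ban}). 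Fixing the first $t+1$ coordinates puts the two walks at $j$ and $2m-j$ at time $t$. From there both are driven by the same suffix $z$ in opposite directions, so the left walk hits $0$ iff the right one hits $2m$.
\item The choice of $\mu$ in \Cref{lem:start-state} makes the maximum of a $0$-absorbed walk symmetric under $j \leftrightarrow 2m-j$. The uniform delay makes the time of the maximum nearly uniform. Together these give the two sides of each matching nearly equal sizes (\Cref{lem:bjt-reflect}).
\item The value is therefore essentially $\Pr(W \text{ hits } 0) = 2m/(3m+1) \to 2/3$, and \Cref{thm:framework} gives $\alpha \to 1/2$.
\end{itemize}
Without some explicit object provably satisfying the ban constraints with value approaching $2/3$, your outline is a research plan rather than a proof.
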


\Cref{thm:main} settles the complexity of the semi-streaming matching problem: after all, the naive greedy algorithm turns out to be optimal for this fundamental problem. 

Our proof of~\Cref{thm:main} is based on the \emph{``blueprint framework''} of the recent work of the authors in~\cite{AssadiJX26}. We describe this framework next and put our result in this context.

\subsection{Blueprint Framework of~\cite{AssadiJX26}} 

In~\cite{AssadiJX26}, we reduced proving lower bounds for semi-streaming matching to constructing ``constant-size'' (relative to stream length) combinatorial
objects called \emph{blueprints}. We now define blueprints and this framework\footnote{We directly define what~\cite{AssadiJX26} calls a \emph{simple} blueprint as the blueprint itself, 
as it is a lot easier to describe and we do not need the extra power of the more general definition.}
(following~\cite{AssadiJX26}, we use `blackboard bold' letters for blueprint notation).

\begin{definition}\label{def:blueprint}
    For integers $P,C \geq 1$, we define a \textbf{blueprint} with parameters $P,C$ as any bipartite graph $\bG=(\bL,\bR,\bE)$ such that: 
    \begin{itemize}
    \item Vertices in $\bL$ and $\bR$ are identified by tuples in $[C]^P$; specifically, for any $x \in [C]^P$ we have a vertex $\bL(x) \in \bL$ and a vertex $\bR(x) \in \bR$. 
    \item Edges of $\bG$ are partitioned into $P$ groups:
    $
      \bE = \bEE{1} \sqcup \cdots \sqcup \bEE{P}.
    $
    \end{itemize}
\end{definition}
\Cref{def:blueprint} simply gives the ``skeleton'' of a blueprint; what we really need from a blueprint is to satisfy the following two constraints (see~\Cref{sec:prelim} for a quick review of notation).  

\begin{definition}\label{def:proper-blueprint}
	We say a blueprint $\bG=(\bL,\bR,\bE)$ with parameters $(P,C)$ is \textbf{proper} if it satisfies the following constraints:
    \begin{itemize}
        \item \textnormal{\textbf{Matching constraint:}} The edges in $\bE$ form a single matching in $\bG$. 

        \item \textnormal{\textbf{Ban constraints:}} For any $x,y \in [C]^P$ and vertices $\bL(x)$ and $\bR(y)$ incident on the same edge $\be=(\bL(x),\bR(y)) \in \bEE{p}$ for some $p \in [P]$:
	\begin{quote}
		for any $z \in [C]^{P-p+1}$, at least one of vertices $\bL(x_{<p} \conc z)$ or $\bR(y_{<p} \conc z)$ has no edges in $\bG$; we
		say these two vertices are \textbf{banned together} by the edge $\be$.
	\end{quote}
    \end{itemize}
\end{definition}
See~\Cref{fig:ban} for an example of ban constraints.

\begin{figure}[h!]
\centering

\begin{subfigure}[t]{0.48\textwidth}
\centering
\begin{tikzpicture}[scale=1, every node/.style={font=\large}]
    % styles
    \tikzstyle{vtx}=[circle, draw=black, fill=yellow!70, minimum size=5mm, inner sep=0pt]
    \tikzstyle{dashededge}=[red!70, dashed, line width=1.0pt]
    \tikzstyle{solidedge}=[black, line width=1.8pt]

    % coordinates
    \def\lx{0}
    \def\rx{4.8}
    \def\yA{3}
    \def\yB{2}
    \def\yC{1}
    \def\yD{0}

    % enclosing boxes
    \draw[rounded corners=10pt] (-0.75,-0.45) rectangle (0.75,3.45);
    \draw[rounded corners=10pt] (4.05,-0.45) rectangle (5.55,3.45);

    % left vertices
    \node[vtx] (L11) at (\lx,\yA) {};
    \node[vtx] (L12) at (\lx,\yB) {};
    \node[vtx] (L21) at (\lx,\yC) {};
    \node[vtx] (L22) at (\lx,\yD) {};

    % right vertices
    \node[vtx] (R11) at (\rx,\yA) {};
    \node[vtx] (R12) at (\rx,\yB) {};
    \node[vtx] (R21) at (\rx,\yC) {};
    \node[vtx] (R22) at (\rx,\yD) {};

    % labels
   % labels: placed 0pt away from the enclosing boxes
    \node[anchor=east] at ([xshift=0pt]-0.75,\yA) {$11$};
    \node[anchor=east] at ([xshift=0pt]-0.75,\yB) {$12$};
    \node[anchor=east] at ([xshift=0pt]-0.75,\yC) {$21$};
    \node[anchor=east] at ([xshift=0pt]-0.75,\yD) {$22$};

    \node[anchor=west] at ([xshift=0pt]5.55,\yA) {$11$};
    \node[anchor=west] at ([xshift=0pt]5.55,\yB) {$12$};
    \node[anchor=west] at ([xshift=0pt]5.55,\yC) {$21$};
    \node[anchor=west] at ([xshift=0pt]5.55,\yD) {$22$};

    % title
    \node at (2.4,3.55) {$\mathbb{E}^{(1)}$};

    % dashed edges
    \draw[dashededge] (L11) -- (R11);
    \draw[dashededge] (L12) -- (R12);
    \draw[dashededge] (L21) -- (R21);
    \draw[dashededge] (L22) -- (R22);

    % solid edge
    \draw[solidedge] (L21) -- (R12);
\end{tikzpicture}
\caption{Adding any edge in $\bEE{1}$ bans the pairs in the identity permutation in the entire blueprint.}
\end{subfigure}
\hfill
\begin{subfigure}[t]{0.48\textwidth}
\centering
\begin{tikzpicture}[scale=1, every node/.style={font=\large}]
    % styles
    \tikzstyle{vtx}=[circle, draw=black, fill=yellow!70, minimum size=5mm, inner sep=0pt]
    \tikzstyle{dashededge}=[red!70, dashed, line width=1.0pt]
    \tikzstyle{solidedge}=[black, line width=1.8pt]

    % coordinates
    \def\lx{0}
    \def\rx{4.8}
    \def\yA{3}
    \def\yB{2}
    \def\yC{1}
    \def\yD{0}

    % enclosing boxes
    \draw[rounded corners=10pt] (-0.75,-0.45) rectangle (0.75,3.45);
    \draw[rounded corners=10pt] (4.05,-0.45) rectangle (5.55,3.45);

    % left vertices
    \node[vtx] (L11) at (\lx,\yA) {};
    \node[vtx] (L12) at (\lx,\yB) {};
    \node[vtx] (L21) at (\lx,\yC) {};
    \node[vtx] (L22) at (\lx,\yD) {};

    % right vertices
    \node[vtx] (R11) at (\rx,\yA) {};
    \node[vtx] (R12) at (\rx,\yB) {};
    \node[vtx] (R21) at (\rx,\yC) {};
    \node[vtx] (R22) at (\rx,\yD) {};

    % labels
   % labels: placed 0pt away from the enclosing boxes
    \node[anchor=east] at ([xshift=0pt]-0.75,\yA) {$11$};
    \node[anchor=east] at ([xshift=0pt]-0.75,\yB) {$12$};
    \node[anchor=east] at ([xshift=0pt]-0.75,\yC) {$21$};
    \node[anchor=east] at ([xshift=0pt]-0.75,\yD) {$22$};

    \node[anchor=west] at ([xshift=0pt]5.55,\yA) {$11$};
    \node[anchor=west] at ([xshift=0pt]5.55,\yB) {$12$};
    \node[anchor=west] at ([xshift=0pt]5.55,\yC) {$21$};
    \node[anchor=west] at ([xshift=0pt]5.55,\yD) {$22$};

    % title
    \node at (2.4,3.55) {$\mathbb{E}^{(2)}$};

    % dashed edges
    \draw[dashededge] (L21) -- (R11);
    \draw[dashededge] (L22) -- (R12);

    % solid edge
    \draw[solidedge] (L21) -- (R12);
\end{tikzpicture}
\caption{If the same edge was instead added to $\bEE{2}$, it would ban fewer but different pairs.}
\end{subfigure}

\caption{An illustration of ban constraints in a blueprint with $P=C=2$ with labels of vertices drawn next to them. Solid (black) edges show two possible edges in two different blueprints and dashed (red) edges show banned pairs resulting
from inclusion of each of these edges separately.}
\label{fig:ban}
\end{figure}

Finally, the following quantity determines how ``good'' a blueprint is for our purpose.
\begin{definition}\label{def:blueprint-value-approx}
	Define the \textnormal{\textbf{value}} of a blueprint $\bG = (\bL,\bR,\bE)$ as:
 	 \[
    		value(\bG) := \frac{\card{\bE}}{\card{\bL}} = \frac{\card{\bE}}{\card{\bR}} = \frac{\card{\bE}}{C^P}. 
    	\]
\end{definition}

We refer the reader to~\cite{AssadiJX26} for several examples of blueprints and the intuition (and the formal proof) behind the connection between blueprints and semi-streaming matching. For us, we need the following 
theorem that captures the blueprint framework put forward in~\cite{AssadiJX26}\footnote{In light of what we said earlier about \emph{simple} blueprints in~\cite{AssadiJX26}, it is worth 
emphasizing that~\cite{AssadiJX26} proves this theorem for simple blueprints and then shows simple and not-simple blueprints are equivalent; given our construction is already a simple blueprint, we 
do not need this equivalence and defined simple blueprints as blueprints.}.

\begin{theorem}[\!\!{\cite{AssadiJX26}}]\label{thm:framework}
	Suppose there is a proper blueprint $\bG$ and define $\alpha := \frac{2-2 \cdot value(\bG)}{2-value(\bG)}$. 
	Then, for any fixed $\delta > 0$, and any sufficiently large $n$, as a function of $\bG$ and $\delta$, the following holds. 
	Any single-pass semi-streaming algorithm for maximum bipartite matching on $n$-vertex graphs cannot achieve
	an $(\alpha+\delta)$-approximation with probability more than $1-\delta$. 
\end{theorem}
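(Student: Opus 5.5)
The plan is to prove the theorem by a reduction to a $(P+1)$-player one-way communication game, in the spirit of the lower bounds of~\cite{GoelKK12,Kapralov13,Kapralov21}. A single-pass semi-streaming algorithm with $\Ot(n)$ memory yields a protocol whose messages have $\Ot(n)$ bits: player $p$ feeds its edges into the algorithm and forwards the memory state. So it suffices to build a hard input distribution from the proper blueprint $\bG$ and show that no protocol with $\Ot(n)$-bit messages achieves approximation $\alpha+\delta$.

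\textbf{Construction.} The main gadget is a Ruzsa--Szemer\'edi (RS) graph on $N$ vertices: its edges split into $t = N^{\Omega(1)}$ induced matchings, each of size $\Omega(N)$ (up to subpolynomial losses), for $N = \Theta(n/C^P)$.
\begin{enumerate}
\item Replace each blueprint vertex $\bL(x)$ and $\bR(x)$ by a block of $N$ vertices.
\item For each $p \in [P]$, each prefix $x_{<p}$ and each coordinate $c \in [C]$, let player $p$ receive a random subsampling of an RS graph between blocks. Exactly one uniformly random hidden index $j$ of its induced matchings is ``special''.
\item The special matching realizes the edges of $\bEE{p}$ between the blocks of $\bL(x_{<p}\conc\cdot)$ and $\bR(y_{<p}\conc\cdot)$, placed on the vertices that survive this round. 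The remaining matchings are filler.
\item The vertices touched by the special matching determine which sub-blocks carry the next level's coordinate. So the later coordinates $z \in [C]^{P-p+1}$ are encoded by hidden indices that are unknown when player $p$'s edges arrive.
\item The final player $P+1$ attaches fresh pendant vertices via a perfect matching to every block vertex that, in the realized structure, has no blueprint edge.
\end{enumerate}

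\textbf{Value computation.} By the matching constraint, the special edges together with the pendant edges form a matching. Every block vertex is then matched, giving optimum about $(2 - value(\bG))\cdot C^P N$ up to normalization, with $value(\bG)\,C^P N$ special edges and the rest pendant edges. In contrast, I would show that any matching the protocol can output with confidence contains essentially no special edges. Each non-special edge it keeps blocks a pendant edge, which accounts for the factor $(2-2\,value(\bG))$ in the numerator and gives the ratio $\alpha$.

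The ban constraints are used exactly here. Suppose the algorithm keeps a filler edge $e$ between $\bL(x_{<p}\conc z)$ and $\bR(y_{<p}\conc z)$ from round $p$, believing it may be special. Because such a pair is \emph{banned together} by some special edge of $\bEE{p}$, at least one endpoint gets a pendant edge in every consistent completion. So $e$ never improves the count over greedy on pendants. I would formalize this as a charging argument: every output edge either is special or lies on vertices that the final player covers with pendant edges.

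\textbf{Main obstacle.} The hardest step is the information-theoretic claim that, after player $p$'s round, the message reveals only $o(1)$ fraction of the special edges. This must hold conditionally across all $P$ rounds, where later rounds' structure depends on earlier hidden indices. I would use a standard RS-graph argument: a message of $\Ot(n)$ bits carries only $\Ot(n)/t = o(N)$ bits on average about a uniformly random one of the $t$ matchings. I would then apply a hybrid argument over rounds to show that the conditional distribution of each hidden index stays close to uniform even given all messages. Since $P$ and $C$ are constants depending only on $\bG$ and $\delta$, the errors from the $P$ hybrids add to at most $\delta$ once $n$ is large. This yields the success probability bound $1-\delta$ for approximation $\alpha+\delta$.
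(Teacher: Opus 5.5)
The paper contains no proof of this statement. It is imported as a black box from \cite{AssadiJX26}, so your sketch has to stand on its own. As written it has a genuine gap exactly where the ban constraints are supposed to act.

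\textbf{The ban constraints do not cover the filler edges you charge against.} A ban constraint only concerns pairs with a \emph{common} suffix, $\bL(x_{<p}\conc z)$ and $\bR(y_{<p}\conc z)$; for $p=1$ this means the same full label. Your charging step assumes every filler edge kept from round $p$ joins such a pair. Nothing in your construction (subsampled RS graphs between blocks, with later coordinates ``encoded by hidden indices'') forces the two endpoints of a round-$p$ filler edge to end up with the same later coordinates.

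A filler edge from $\bL(x_{<p}\conc z)$ to $\bR(y_{<p}\conc z')$ with $z'\neq z$ is not constrained by $\bG$ at all. Both endpoints may lie in blueprint-matched blocks whose blueprint edges come from rounds other than $p$, so inducedness of the round-$p$ special matching does not help. Neither endpoint then receives a pendant, and the edge is pure gain for the algorithm.

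This is not a lower-order effect. Suppose the later coordinates of the two endpoints were roughly independent. Then a large filler matching stored in round $1$ would already have about a $value(\bG)^2$ fraction of its edges with both endpoints in matched blocks. That breaks your bound of $2(1-value(\bG))C^PN$ on the algorithm's matching.

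For the charging to work, the construction must guarantee that, up to a negligible fraction, every non-special round-$p$ edge joins two vertices that later receive identical coordinates $p,\ldots,P$, while special edges may change them. Designing such an encoding and proving this preservation property is the heart of the reduction, and it is missing. You should also check that the special matchings you need can be realized by RS graphs with super-constantly many induced matchings, since such matchings cannot cover much more than half the vertex set. Saying they have ``size $\Omega(N)$'' does not pin down the ratio $\alpha$.

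\textbf{The information-theoretic step is misformulated.} In your own construction, later rounds depend on earlier hidden indices, and the pendant edges reveal exactly which vertices carry special edges. So the round-$p$ index is typically determined by later inputs and can be written into a later message with $O(\log t)$ bits. The claim that it ``stays close to uniform even given all messages'' is therefore false, and no hybrid argument will give it.

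What you need instead has two parts:
\begin{itemize}
\item The subsampling bits of the round-$p$ special matching must be independent of everything revealed after round $p$, given the indices. This means pendants and later vertex sets must depend on the unsampled template, not on which special edges survived.
\item A direct-sum bound then shows that the round-$p$ message, and hence the final output, contains only about $\Ot(n)/t$ of those edges in expectation.
\end{itemize}

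Finally, RS graphs with linear-size induced matchings and $t=N^{\Omega(1)}$ are not known. The available $t=N^{\Omega(1/\log\log N)}$ suffices against $\Ot(n)$ memory, because $C^P$ is a constant.
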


In~\cite{AssadiJX26}, we constructed blueprints with values converging to $(5-\sqrt{10})/3$ which, plugged in~\Cref{thm:framework}, proves the impossibility of better than $(8-2\sqrt{10})/3 \sim 0.558$ approximation
for the semi-streaming matching problem. The limit of the framework of~\cite{AssadiJX26} however is determined by the following quantity
\begin{align}
\sup\set{value(\bG):\text{$\bG$ a proper blueprint}},  \label{eq:sup}
\end{align}
which was not well understood in~\cite{AssadiJX26}. 

\paragraph{Our blueprint construction.} 
We determine the optimal value of~\Cref{eq:sup} in this work, proving the following, which is the main \emph{technical} result of our work. 
\begin{theorem}\label{thm:blueprint}
	There is a sequence of proper blueprints $(\bG_m)_{m \geq 1}$ with parameter $P_m \to \infty$ and 
	\[
		\lim_{m \to \infty} value(\bG_m) = 2/3.
	\] 
\end{theorem}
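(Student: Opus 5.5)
The plan is to build $\bG_m$ recursively, so that each new level of the recursion adds edges with geometrically decreasing total mass. The target series is $\tfrac12 + \tfrac18 + \tfrac1{32} + \cdots = \tfrac{1/2}{1-1/4} = 2/3$. As a sanity check, this value is exactly what \Cref{thm:framework} needs: $v=2/3$ gives $\alpha = (2-2v)/(2-v) = 1/2$, consistent with \Cref{thm:main}.

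\textbf{What the ban constraints say.} An edge $\be=(\bL(x),\bR(y))\in\bEE{p}$ bans together the pairs $\bL(x_{<p}\conc z),\bR(y_{<p}\conc z)$. These pairs live in the two ``subtrees'' with prefixes $x_{<p}$ and $y_{<p}$ and share a common suffix $z$. Two observations follow.
\begin{itemize}
\item Any edge with $x_{<p}=y_{<p}$ (in particular any edge of $\bEE{1}$) bans every identity pair $\bL(w),\bR(w)$ inside that subtree. This caps the value of the subtree at $1/2$.
\item To exceed $1/2$, edges therefore have to run between \emph{different} prefixes. Their bans then only couple $\bL$-vertices of one subtree with $\bR$-vertices of another, at matching suffixes.
\end{itemize}
So I would encode the bans by a directed ``ban graph'' $H_p$ on prefixes $[C]^{p-1}$: an arc $u\to u'$ is present if some edge of $\bEE{p}$ goes from subtree $u$ on the left to subtree $u'$ on the right. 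The constraint then reads: for every arc $u\to u'$ and every suffix $z$, $\bL(u\conc z)$ and $\bR(u'\conc z)$ are not both matched.

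\textbf{The recursive construction.} I would assign each vertex a \emph{level} $\ell\in\{1,\dots,P\}$, read off from its coordinates, say as the first coordinate whose value falls into a designated ``active'' part of $[C]$. The construction then proceeds in three steps.
\begin{enumerate}
\item At level $p$, split the still-unmatched vertices of each active prefix class into halves.
\item Match $\bL$-halves to $\bR$-halves lying in \emph{different} prefix classes, using edges in $\bEE{p}$.
\item Pair the classes along a carefully chosen permutation, so that every banned pair created at level $p$ always has one endpoint in a part that is either already left unmatched or will be left unmatched forever.
\end{enumerate}
The choices are to be tuned so that the level-$p$ edges cover a $4^{-(p-1)}\cdot\tfrac12$ fraction of $\bL$. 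This can work because at each step half of the unmatched mass on each side becomes permanently dead, while the other half is recycled into deeper coordinates. Taking $C$ large (to approximate the halving exactly) and $P=m$ levels gives value $\sum_{p\le m}\tfrac12 4^{-(p-1)}-o(1)\to 2/3$. The matching constraint follows by construction, since each vertex is matched at most once, at its own level.

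\textbf{Main obstacle.} The hard step is verifying the ban constraints \emph{across} levels. Bans from a level-$p$ edge quantify over all suffixes $z\in[C]^{P-p+1}$, so they reach into the deeper levels $p'>p$ where new edges are still being added. The invariant I would try to maintain is this: for every arc $u\to u'$ of $H_p$, among the vertices of subtree $u$ all $\bL$-vertices whose suffix pattern will ever be matched are assigned to one side of a fixed bipartition of suffixes, and all matched $\bR$-vertices in $u'$ lie on the other side. Concretely, this means choosing the pairing of classes at each level to be consistent with a single global labeling of suffixes as ``$\bL$-live'' or ``$\bR$-live''. If a uniform labeling fails, I would fall back to making the level-$(p+1)$ construction inside each class a relabeled copy of the whole recursive gadget, so that the inductive hypothesis carries the bans, and accept a loss that vanishes as $C\to\infty$.
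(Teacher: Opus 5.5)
\textbf{There is a genuine gap.} Your proposal does not yet contain a construction. The edges, the ``carefully chosen permutation'', and the verification of the ban constraints across levels (which you yourself call the main obstacle) are the entire content of the theorem. They are left as an invariant you would try to maintain plus an unspecified fallback.

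Worse, the scheme as described provably cannot beat $1/2$. Summing the ban constraint of one edge $(\bL(x),\bR(y))\in\bEE{p}$ over all $z\in[C]^{P-p+1}$, as in the proof of \Cref{clm:1}, gives
\[
\rho_L(x_{<p})+\rho_R(y_{<p})\leq 1,
\]
where $\rho_L(u)$ (resp.\ $\rho_R(u)$) is the fraction of the vertices $\bL(u\conc z)$ (resp.\ $\bR(u\conc z)$) that are matched, counting \emph{every} deeper level inside the subtree of $u$.

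Apply this at your first edge-carrying coordinate $p$. The classes $u\in[C]^{p-1}$ there cover all vertices, and in your bookkeeping the recycled mass $\tfrac18+\tfrac1{32}+\cdots$ lives inside these same subtrees. You pair these classes by a permutation $\pi$. Summing $\rho_L(u)+\rho_R(\pi(u))\leq 1$ over all $u$ gives $\sum_u\rho_L(u)+\sum_u\rho_R(u)\leq C^{p-1}$. Each sum equals $C^{p-1}\cdot value(\bG)$, so $value(\bG)\leq 1/2$, no matter how the deeper levels are filled.

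This rules out both your invariant and your fallback. A single global $\bL$-live/$\bR$-live partition $A\sqcup B$ directly gives every paired class $\rho_L\leq\card{A}/C^{P-p+1}$ and $\rho_R\leq\card{B}/C^{P-p+1}$. Copying the gadget into each class changes nothing in the summation above. Separately, level $1$ cannot sit in $\bEE{1}$: there is only one prefix class there, and, as you noted, any $\bEE{1}$ edge caps the value at $1/2$.

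\textbf{What is missing is asymmetry.} To beat $1/2$, an edge must join an $\bL$-endpoint whose subtree is $\bL$-dense ($\rho_L>1/2$) to an $\bR$-endpoint whose subtree is correspondingly $\bR$-sparse, with exactly complementary sets of successful suffixes. These conditional densities are martingales that must be tracked coordinate by coordinate. At each coordinate, the $\bL$-endpoints at density $q$ must be as numerous as the $\bR$-endpoints at density $1-q$.

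The paper gets all of this from gambler's ruin:
\begin{itemize}
\item The state $j$ of $W(x)$ at time $t$ encodes an $\bL$-density of about $(2m-j)/(2m)$, and edges join $\bL_{j,t}$ to $\bR_{2m-j,t}$ inside $\bEE{t+2}$.
\item Reflection makes the continuations complementary: $W$ from $j$ hits $0$ iff $\barW$ from $2m-j$ hits $2m$. So the bans hold at all deeper coordinates automatically.
\item The start distribution $\mu$ of \Cref{lem:start-state} equalizes $\Pr(\text{ends at }0,\ \maxwalk{W}=j)$ and $\Pr(\text{ends at }0,\ \maxwalk{W}=2m-j)$.
\item Random delays equalize the times, giving value at least $(1-\tfrac1m)\cdot\tfrac{2m}{3m+1}-3m\cdot 2^{-m}\to 2/3$.
\end{itemize}
Your series $\tfrac12+\tfrac18+\cdots$ is only bookkeeping. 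Nothing in the proposal produces the needed asymmetry, so it does not yield proper blueprints of value near $2/3$.
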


Plugging in the blueprints in~\Cref{thm:blueprint} in~\Cref{thm:framework} then immediately proves~\Cref{thm:main}. Our proof of~\Cref{thm:blueprint} is  constructive and we 
give an explicit description of the blueprints in the family (even though this is not required to apply~\Cref{thm:framework}). We also note that as pointed out in~\cite{AssadiJX26}, 
one does need $P \to \infty$ for the value of the blueprint to be able to reach $2/3$.

As an aside, we should note that, technically speaking, one also needs to prove an upper bound on~\Cref{eq:sup} to fully determine the value of~\Cref{eq:sup} (although this is not needed for proving~\Cref{thm:main}). However, a $2/3$ upper bound follows 
immediately from~\Cref{thm:framework} and the greedy algorithm for the semi-streaming matching problem. We give a different and direct proof of this upper bound in~\Cref{app:2-3-upper} as it provides further insight into the structure of blueprints.

\subsection{Our Blueprint Construction at a High Level}\label{sec:ideas} 

Blueprints in~\Cref{thm:blueprint} are created by using elementary properties of random walks on the integer line. The general idea of the construction is as follows (as our blueprints 
are \emph{entirely disjoint} from~\cite{AssadiJX26} technique-wise, we will not review those blueprints and refer the reader to~\cite{AssadiJX26}). 

\paragraph{Vertices and random walks.} Let $m \geq 1$ be an integer and $\bG = (\bL,\bR,\bE)$ be a blueprint with parameters $P$ and $C$ that depend on $m$. 
To any label $x \in [C]^P$, we assign a walk $W(x)$ on $\set{0,\ldots,2m}$ that starts from some state in $[2m-1]$ and in each step moves to either a one larger or one smaller 
state and terminates when it reaches $0$ or $2m$. Define the \emph{reflection} walk $\barW(x)$ that starts at the same state as $W(x)$ but in every step, takes the opposite direction as $W(x)$. 
This assignment is done so that for $x$ chosen uniformly from $[C]^P$, $W(x)$ (also $\barW(x)$) becomes a random walk whose starting state is sampled from some distribution $\mu$ and 
each of its step is chosen uniformly. This is done by using the first index of $[P]$ 
to simulate sampling from $\mu$ (by partitioning $[C]$ into intervals of lengths proportional to $\mu(i)$-values) and remaining indices to simulate the random walk (by partitioning $[C]$ into two equal halves); see~\Cref{fig:W(x)} for a quick illustration (we have
skipped one other coordinate in $[P]$ called ``delay'' and will explain it later). 

\paragraph{Edges between $0$-reaching random walks.} Let $\bLstar$ (resp. $\bRstar$) be vertices $\bL(x)$ (resp. $\bR(x)$) whose random walk $W(x)$ (resp. $\barW(x)$) reaches $0$. Moreover, for any $j \in [2m-1]$ and time $t \in [P-1]$, let $\bLstar_{j,t}$ denote the following subset of $\bLstar$: 
\begin{itemize}
	\item All $\bL(x) \in \bL$ such that $W(x)$ reaches $0$, $W(x)$ has $j$ as the \emph{maximum} state it ever visits, and, it has visited $j$ for the \emph{first} time at the $t$-th step of the walk. 
\end{itemize} 
We define $\bRstar_{j,t}$ analogously using the reflection walk $\barW$. In the blueprint, we add a \emph{matching} between $\bLstar_{j,t}$ and $\bRstar_{2m-j,t}$ in $\bEE{t+1}$ with size equal to the smaller of the two. 

\paragraph{Properness of the blueprint.} We need to ensure the inserted edges form a matching and that the ban constraints are satisfied. The former is easy to verify as $\bLstar_{j,t},\bRstar_{j,t}$ sets are disjoint for different $j,t$ and each is incident on a matching. We now prove the ban constraint.

Consider any edge between $\bL(x) \in \bLstar_{j,t}$ and $\bR(y) \in \bRstar_{2m-j,t}$ in some $\bEE{t+1}$ for $j \in [2m-1]$ and $t \in [P-1]$ (all edges are of this form). The ban constraint requires that for any $z \in [C]^{P-t}$, at most one of $\bL(x_{\leq t} \circ z)$ or $\bR(y_{\leq t} \circ z)$ is matched. Fixing $x_{\leq t}$ and $y_{\leq t}$ ``brings'' any walk of $W(x_{\leq t} \circ z)$ and $\barW(y_{\leq t} \circ z)$ 
to the states $j$ and $2m-j$ at their $t$-th step (by the definitions of $\bLstar_{j,t},\bRstar_{2m-j,t}$). Now, suppose $\bL(x_{\leq t} \circ z)$ is matched. Then, it means that the remainder of the walk, which with a slight abuse of notation we denote by $W(z)$, 
will move from the state $j$, uses $z$, and reaches $0$ (so $\bL(x_{\leq t} \circ z)$ becomes matched). But then, the reflection walk $\barW(z)$ will move in the ``opposite'' direction: it moves from $2m-j$, uses $z$, and thus reaches $2m$ instead; this means 
that $\bR(y_{\leq t} \circ z)$ will not be matched, as desired. This is the entire proof of the ban constraint (see also~\Cref{fig:W}). 

\paragraph{Value calculation.} The final step is to consider the value of this blueprint. In our construction, we are going to ensure that the sizes of each of $\bLstar_{j,t}$ and $\bRstar_{2m-j,t}$ that are matched together are 
(almost) equal to each other. This will ensure that the size of the matching of the blueprint is (almost) $\card{\bLstar}$, which implies that
\begin{align}
	value(\bG) := \frac{\card{\bLstar}}{C^P} = \Pr_{x \in_R [C]^P}\paren{\text{walk $W(x)$ reaches $0$}}.  \label{eq:over-value}
\end{align}
As stated earlier, the constraint we have is that $\bLstar_{j,t}$ and $\bRstar_{2m-j,t}$ sets should have roughly the same size. For now let us \emph{relax} this constraint to instead 
ask that $\bLstar_{j} := \cup_t~ \bLstar_{j,t}$ and $\bRstar_{2m-j} := \cup_t~ \bRstar_{2m-j,t}$ have the same size. This translates to having, 
\begin{align}
	&\forall j \in [2m-1] \Pr_{x \in_R [C]^P}\paren{\text{walk $W(x)$ reaches $0$ and maximum state it visits is $j$}} \notag \\
	&\hspace{90pt}= \Pr_{x \in_R [C]^P}\paren{\text{walk $\barW(x)$ reaches $0$ and maximum state it visits is $2m-j$}}. \label{eq:over-constraint}
\end{align}
The underlying distributions of both $W(x)$ and $\barW(x)$ here are just standard random walks starting from a random state chosen according to 
some distribution $\mu$. Thus, at this point, the goal is to maximize~\Cref{eq:over-value} subject to the constraints in~\Cref{eq:over-constraint}, by choosing
the initial distribution $\mu$ carefully. 

The type of random walk we have here is often called ``gambler's ruin'' and is well studied as one of the simplest examples of Markov Chains; see, e.g.~\cite[Section 2.1]{levin2026markov}. 
For instance, it is known that starting from a \emph{uniform} distribution $\mu$, maximum visited state will also be distributed uniformly and thus certainly satisfies~\Cref{eq:over-constraint}. But in this case, 
the walk has an equal chance of reaching $0$ and $2m$, making $value(\bG)=1/2$ only. However, by ``shifting'' the distribution of $\mu$ towards $0$, while still enforcing~\Cref{eq:over-constraint}, one
can improve the objective function. We are not aware of prior work that study this particular formulation but solving this optimization problem has a standard method: one can cast the problem as a 
linear program in $\mu(i)$-values and obtain a closed form formula for the variables. Doing this in our case leads to the objective value of $2m/(3m+1)$ (see~\Cref{lem:start-state}), giving us 
the desired value for the blueprint. 

There is however one catch with the argument above: we moved from enforcing (almost) equality in sizes of $\bLstar_{j,t}$ and $\bRstar_{2m-j,t}$ to only $\bLstar_j$ and $\bRstar_{2m-j}$; but, symmetrizing the distribution 
of maxima (through the choice of $\mu$) does not mean the time to reaching them will also be symmetric. This however can be handled rather easily by introducing \textbf{random delays} in the random walks: we dedicate one more coordinate in $[P]$
for this task, and start the random walks at different time stages to ensure that distribution of ``time to hit maximum'' becomes (almost) uniform over the time steps which fixes the above problem. 

\subsection{Further Corollaries of Our Result}\label{sec:corollary} 

We mention two further corollaries of our results. The first one is for the matching problem but in another model, the \emph{online matching with preemption}, whereas the second one is for another problem, the \emph{minimum vertex cover}, but back
in the semi-streaming model. 

\paragraph{Online matching with preemption.} In this problem, the edges of the graph are arriving sequentially
and the algorithm must maintain a matching by accepting or rejecting arriving edges and can preempt previously accepted ones by discarding them. This model is a generalization of the original online matching (with one-sided vertex arrival) in~\cite{KarpVV90}. 

The \emph{competitive ratio} of an algorithm in this model is the ratio of the matching size provided by the algorithm compared to the maximum matching of the graph. 
The greedy algorithm achieves a half competitive ratio in this model with no need for preemption. When preemptions are not allowed,~\cite{GamlathKMSW19} proved that  
half approximation is optimal. No better algorithm is known even when preemption is allowed and a series of results over the years have ruled out competitive ratios of 
$(1-1/e) \sim 0.632$~\cite{KarpVV90}, $(1+\ln{(2)})^{-1} \sim 0.590$~\cite{EpsteinLSW13}, $(2-\sqrt{2}) \sim 0.585$~\cite{HuangPTTWZ19}, and very recently $0.5661$~\cite{KissS26}\footnote{It is not a coincidence 
that these numbers and their progressions sound very similar to the ones mentioned for the semi-streaming matching: until~\cite{AssadiJX26}, all previous semi-streaming matching lower bounds in~\cite{GoelKK12,Kapralov13,Kapralov21} 
were based on generalizing online preemptive matching instances to the semi-streaming model (but using completely white-box arguments); see~\cite{Kapralov21} for a detailed discussion of this topic.}.
We prove that the greedy algorithm is also optimal for this problem. 

\begin{corollary}\label{cor:online}
	There is no algorithm for the online matching problem with preemption with an expected competitive ratio 
	strictly better than half by any constant. 
\end{corollary}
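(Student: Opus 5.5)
The plan is to reduce \Cref{cor:online} to \Cref{thm:main}, or more precisely to the lower-bound instances behind it, by simulating any online preemptive algorithm with a single-pass streaming algorithm of small memory. An online algorithm with preemption only ever holds its current matching $M$; each arriving edge is either added (possibly evicting conflicting edges of $M$) or discarded, and discarded or preempted edges can never come back. Running it on the stream while storing only $M$ and its internal state gives a single-pass algorithm whose output has exactly the same size. The one thing to control is the internal state: for a deterministic algorithm the state need not be small in principle. To avoid this I will not invoke \Cref{thm:main} as a black box. Instead I will rerun \Cref{thm:framework} with the blueprints of \Cref{thm:blueprint}, noting that the framework's streaming lower bound is really an information bottleneck on what the algorithm keeps about earlier edges.

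The concrete route is as follows. In the online preemptive setting, any edge not in the current matching is lost forever, so when the stream ends the output is a subset of a matching of size at most $n/2$. I will argue that the hard distribution of \Cref{thm:framework} only needs the fact that the algorithm's eventual output, restricted to edges of earlier phases, is determined by a summary of $\Ot(n)$ bits. For an online preemptive algorithm, that summary is the current matching, which takes $O(n\log n)$ bits. The algorithm's private state can be anything, but I will make it irrelevant by a standard strengthening: give the algorithm the entire future adversarial structure in advance, so that all of its state other than $M$ is a function of public information. That leaves the random choices inside the hard distribution, which the framework hides, and the set of retained edges as the only information carried forward. The framework's counting/indistinguishability argument then bounds the expected number of retained "useful" edges, in the same way it bounds what a memory-limited algorithm can remember.

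With this in hand, I take the blueprints $\bG_m$ with $value(\bG_m)\to 2/3$. Then $\alpha=\frac{2-2\,value}{2-value}\to \frac{2/3}{4/3}=1/2$. So for every constant $\epsilon>0$ there is an $m$ for which no online preemptive algorithm achieves a $(1/2+\epsilon)$ ratio with probability above $1-\delta$. To pass from a high-probability statement to an expected competitive ratio, I use the fact that the ratio lies in $[0,1]$. With probability at least $\delta$ the ratio is below $1/2+\epsilon$, and otherwise it is at most $1$. I also need the framework to give a distributional (Yao-style) bound on the expected ratio, which it does because the hard instance is a fixed distribution. Choosing $\delta$ close to $1$, which the framework permits since $\delta$ is arbitrary, the expected ratio is at most $1/2+O(\epsilon)$.

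I expect the main obstacle to be the middle step: justifying that the framework's lower bound applies to an online algorithm whose only constraint is that it stores a matching, rather than having $\Ot(n)$ total memory. My first attempt will be to check that the proof of \Cref{thm:framework} in \cite{AssadiJX26} bounds only the information about each phase's edges that is available at the end. If it does, the current matching, at $O(n\log n)$ bits, is a valid summary, because edges outside it are permanently unusable and everything else the algorithm stores is independent of the hidden randomness given the retained edges.
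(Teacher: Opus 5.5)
Your closing computation is right: $value(\bG_m)\to 2/3$ gives $\alpha\to 1/2$. The paper's proof consists of exactly that computation together with a citation: it invokes the observation of \cite{AssadiJX26} that the blueprint framework already yields lower bounds for online preemptive matching, and plugs in \Cref{thm:blueprint}. The step you use to replace that citation does not work. An online preemptive algorithm may remember every edge it has seen, so its state is not a function of $M$ plus public information, and your strengthening cannot make it one. If what you reveal in advance is only the deterministic part of the construction, the state still depends on the hidden randomness through the edges that have already arrived. If it includes the future random choices, the hard instance collapses: the algorithm then knows which arriving edges will end up useful and simply holds those (they form a matching). For the same reason, your claim that everything else the algorithm stores is independent of the hidden randomness given the retained edges is false. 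So the online algorithm cannot be treated as an $\Ot(n)$-bit summary to which the memory argument behind \Cref{thm:framework} applies.

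The reason such instances are hard online is different in kind. Roughly, which edges of the current phase turn out to be useful is decided by random choices revealed only in later phases. These choices are independent of the matching $M$ held now, and since $M$ is a matching it captures only a small fraction of the useful edges, however much the algorithm knows about the past. That is the entire content of the corollary beyond \Cref{thm:blueprint}, and your proposal leaves it as a conditional (``if it does\ldots'').

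Your passage from a probability bound to an expected competitive ratio is also invalid. In \Cref{thm:framework} the same $\delta$ is both the slack in the ratio and the failure probability. You therefore cannot take $\delta$ close to $1$ while keeping the threshold at $1/2+\epsilon$: the threshold becomes $\alpha+\delta>1$ and the statement is vacuous. Used as stated, the theorem only bounds the expected ratio by $\delta(\alpha+\delta)+(1-\delta)=1-\delta(1-\alpha-\delta)$, which for $\alpha=1/2$ is never below $15/16$. An expected-ratio bound has to come directly from the distributional analysis. One bounds the expected number of useful edges any deterministic algorithm retains on the hard distribution, then applies Yao's principle. This is again what the online version of the framework in \cite{AssadiJX26} supplies. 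The mere fact that the hard instance is a fixed distribution does not give it.
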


As already observed in~\cite{AssadiJX26}, the blueprint framework also implies lower bounds for the online matching problem with preemption (given this connection, technically, the best known bound on the competitive ratios 
prior to our work is the $0.558$ bound of~\cite{AssadiJX26}, which was obtained independently and concurrently to~\cite{KissS26}). Thus,~\Cref{cor:online} directly follows from our~\Cref{thm:blueprint} and the 
blueprint framework of~\cite{AssadiJX26}.

\paragraph{Semi-streaming (bipartite) vertex cover.} Going back to the semi-streaming model, a closely related problem to maximum bipartite matching is its dual, the minimum bipartite vertex cover problem. 
The greedy algorithm is also a $2$-approximation algorithm for this problem and no better algorithms are known. It has been asked in prior work, e.g.,~\cite{AssadiBBMS19,DerakhshanGR25} whether this algorithm can be improved. 
We resolve this question in this work. 

\begin{corollary}\label{cor:vc}
	There is no single-pass semi-streaming algorithm for the minimum bipartite vertex cover problem that can achieve any constant approximation ratio strictly better than two with constant probability. 
\end{corollary}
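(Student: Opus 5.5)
The plan is to reduce to the hard input distribution behind \Cref{thm:framework}, instantiated with the blueprints of \Cref{thm:blueprint}. We cannot simply invoke the matching lower bound as a black box. By K\"onig's theorem the optimum cover and the optimum matching have equal size, but a cover algorithm only certifies a number, while \Cref{thm:main} concerns \emph{outputting} a matching. So the proof has to look inside the framework of~\cite{AssadiJX26}.

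Recall the shape of that construction. The stream arrives in $P$ phases. Phase $p$ places many disguised copies of the edge groups $\bEE{p}$ inside dense ``decoy'' induced matchings. A final phase adds edges incident to the vertices banned together by the hidden edges. The key information-theoretic lemma of the framework says that, with $\Ot(n)$ memory, the algorithm's state retains only an $o(1)$ fraction of the hidden (blueprint) edges of each phase, conditioned on everything that can be recovered. Their positions among the decoys are essentially unknown.

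\textbf{Step 1.} Write down $\mathrm{OPT}$ of the final graph explicitly. Normalize so that the blueprint has $N$ vertices per side. The optimum matching has size roughly $(2-value(\bG))\cdot N$: the hidden edges contribute $value(\bG)\cdot N$, and the final-phase edges saturate the remaining vertices, which is possible exactly because of the ban constraints. By K\"onig this is also the optimum cover size.

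\textbf{Step 2.} Show that any vertex cover $S$ the algorithm outputs must cover every hidden edge. The algorithm does not know which pairs among the decoy induced matchings of phase $p$ are hidden. Hence, with probability $1-o(1)$ over the hidden choice, $S$ must already contain an endpoint of almost all \emph{candidate} pairs in the relevant blocks. Otherwise, by the indistinguishability lemma, some unknown hidden edge is uncovered with constant probability.

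\textbf{Step 3.} Turn Step~2 into a counting bound. Each block of candidates is a perfect-like induced matching on the vertices $\bL(x),\bR(y)$ over all $x,y$ with the prescribed prefix structure. Covering almost all of it forces roughly one vertex per label on one of the two sides, i.e.\ about $N$ vertices, for each side-pair that carries hidden edges. Beyond that, $S$ must still cover the final-phase edges. These attach to the vertices left unmatched by the hidden matching, so they cannot be absorbed by the same vertices. The target inequality is
\[
|S| \;\ge\; (1-o(1))\bigl(N + (2-2\,value(\bG))N\bigr),
\]
possibly with a different but analogous accounting. As $value(\bG)\to 2/3$ via \Cref{thm:blueprint}, this makes $|S|/\mathrm{OPT} \to 2$. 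If the arithmetic does not come out to exactly $2$ at $value=2/3$, I would adjust the accounting, for instance by checking whether the final phase can be chosen so that the cheapest way to cover it is disjoint from the forced cover of Step~3. This amounts to re-optimizing the analogue of $\alpha$ for covers instead of matchings.

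\textbf{Main obstacle.} The hard step is Step~2/3: upgrading the framework's ``the algorithm cannot \emph{find} hidden edges'' statement to ``the algorithm must \emph{cover} all candidates.'' Covering is monotone and can be satisfied by taking entire sides, so the argument must charge the forced vertices against $\mathrm{OPT}$ carefully. I would first try a direct averaging argument. Conditioned on the memory state, the hidden edge of each block is near-uniform over $\Omega(1)$-fraction of candidates. Then $\Pr[\text{hidden edge uncovered}]$ is at least the fraction of uncovered candidates minus $o(1)$. Summing over blocks and applying Markov's inequality then yields the counting bound with constant probability, as required by the statement.
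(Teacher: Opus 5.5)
There is a genuine gap, and it is more than the arithmetic you flag. Take the instance as you describe it. The hidden and decoy edges run between the blueprint blocks $\bL,\bR$. The final-phase edges give each blueprint-unmatched vertex $u\in U$ a fresh partner, so $\mathrm{OPT}=(2-v)N$ with $v=value(\bG)$.

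On this instance the set $\bL\cup(U\cap\bR)$ is already a vertex cover. $\bL$ covers every hidden and decoy edge, and every final-phase edge at $U\cap\bL$; $U\cap\bR$ covers the remaining final-phase edges. Its size is $N+(1-v)N=\mathrm{OPT}$. An algorithm can output it after storing only the $O(n)$ final-phase edges, since the bipartition is public. So no counting argument on this instance can give a ratio above $1$. Concretely:
\begin{itemize}
\item Your Step~3 claim that the final-phase edges ``cannot be absorbed'' by the vertices covering the candidates is false: those at $U\cap\bL$ are absorbed by $\bL$.
\item Step~2 does no work. Decoys are genuine edges that must be covered anyway, and one side covers them all at cost $N$.
\item Even your target inequality would give only $(3-2v)/(2-v)=5/4$ at $v=2/3$, never $2$.
\end{itemize}

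The missing idea is that the factor $2$ comes from instances whose optimal cover is \emph{small}. Suppose the hidden edges are absent, or not aligned with the final phase. The ban constraints put an endpoint of every decoy edge in $U$, so $U$ alone is a cover and $\mathrm{OPT}_{VC}=2(1-v)N$. With the hidden edges present, $\mathrm{OPT}_{VC}=(2-v)N$. The ratio $(2-v)/(2-2v)=1/\alpha$ is exactly the inverse of the matching ratio in \Cref{thm:framework}, and it tends to $2$ as $v\to 2/3$.

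Now take a cover of size at most $(2-\varepsilon)\cdot 2(1-v)N$. It must spend $2(1-v)N$ vertices on the final-phase matching, and those vertices lie in $U$ or are fresh. So for $v$ close to $2/3$ it contains fewer than $vN$ blueprint-matched vertices. It therefore misses some of the $vN$ disjoint hidden edges whenever they are present.

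So what you must prove is that the algorithm's memory cannot reveal whether the hidden edges are present. That is an indistinguishability statement about a modified pair of instances, not a lower bound on covers of the original matching instance. This is the ``duality plus standard modification'' (as in~\cite{AssadiK17,DerakhshanGR25}) that the paper applies on top of the framework of~\cite{AssadiJX26}.
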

We note that prior to our work, no $2$-approximation lower bounds were known for this problem even in general graphs (the UGC hardness of $2$-approximation vertex cover~\cite{KhotR03} does not imply a semi-streaming lower bound, 
as runtime of algorithms is not bounded in this model). 

The proof of this result is again by observing that the blueprint framework of~\cite{AssadiJX26} also implies lower bounds for the vertex cover problem given the duality of the problems (plus a standard modification for 
going from matching lower bounds to vertex cover ones as done previously in the literature, e.g., in~\cite{AssadiK17,DerakhshanGR25}).

\clearpage

% !TeX root = main.tex 
%!TEX root = main.tex

\newcommand{\maxwalk}[1]{\ensuremath{\textnormal{\textsf{w}}_{\textnormal{max}}(#1)}}
\newcommand{\timemax}[1]{\ensuremath{\textnormal{\textsf{t}}_{\textnormal{max}}(#1)}}

\newcommand{\maxwalkW}{\ensuremath{\textnormal{\textsf{w}}_{\textnormal{max}}}}
\newcommand{\timemaxW}{\ensuremath{\textnormal{\textsf{t}}_{\textnormal{max}}}}

\newcommand{\dist}{\ensuremath{\mathcal{D}}}

\section{Preliminaries}\label{sec:prelim}

\paragraph{Notation.} For integers $1 \leq a \leq b$, we define $[a:b] := \set{a,a+1,\ldots,b}$ and $[b] := \set{1,\ldots,b}$. For a tuple $x=(x_1,\ldots,x_t)$ and $i \in [2:t]$, 
we denote $x_{<i} := (x_1,\ldots,x_{i-1})$.  

For a random variable $X$, we sometimes write the subscript $X$ in $\Pr_X(f(X))$ (for some $f$), to 
explicitly specify the randomness is coming from $X$.  
For a random variable $X$ and a set $S$, we write $X \in_R S$ to mean $X$ is sampled uniformly at random from $S$. 

\paragraph{Gambler's ruin.} We use a standard type of random walks on integer line $\set{0,1,\ldots,2m}$ for some integer $m \geq 1$, often called the \emph{gambler's ruin}; see, e.g.~\cite[Section 2.1]{levin2026markov}. 

The walk $W=w_1,w_2,\cdots$ starts at some integer $w_1 \in [2m-1]$ and in each time $t \geq 1$, it updates $w_{t+1} = w_{t}+1$ with probability half and $w_{t+1} = w_{t}-1$ otherwise. 
The walk terminates at the first time $t$ when $w_t = 0$ or $w_t = 2m$ and remains in this state. Throughout, by a random walk, we always mean the above type of walk unless explicitly stated otherwise. 

\begin{fact}[{c.f.~\cite[Proposition 2.1]{levin2026markov}}]\label{fact:ruin}
	For a random walk $W$ over $\set{0, \ldots, 2m}$ and an integer $k$ in $\set{0,\ldots,2m}$,
	\[
		\Pr_{W}\paren{\text{$W$ reaches $0$ before $2m$} \mid \text{$W$ starts at $k$}} = \frac{2m-k}{2m}. 
	\]
	Moreover, the expected length of the walk before terminating is at most $m^2$. 
\end{fact}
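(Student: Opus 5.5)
The statement to prove is \Cref{fact:ruin}, the classical gambler's ruin fact. I will prove both parts by first-step analysis, solving the resulting linear recurrences with boundary conditions.

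\emph{Absorption probability.} Define $h(k) := \Pr(\text{$W$ reaches $0$ before $2m$} \mid w_1 = k)$ for $k \in \set{0,\ldots,2m}$. The boundary values are $h(0)=1$ and $h(2m)=0$. Conditioning on the first step gives, for every $k \in [2m-1]$,
\[
h(k) = \tfrac12 h(k-1) + \tfrac12 h(k+1).
\]
So $h$ is discrete-harmonic on $[2m-1]$. Equivalently, the increments $h(k+1)-h(k)$ are constant, which makes $h$ affine in $k$. The boundary values then force $h(k) = (2m-k)/(2m)$.

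The first-step equation presumes the walk terminates almost surely. This follows because from any state there is a run of at most $2m$ consecutive steps in the same direction that terminates the walk. That run has probability at least $2^{-2m}$, so the survival time has a geometrically decaying tail. Alternatively, $w_t$ is a bounded martingale and optional stopping at the a.s.\ finite hitting time $\tau$ gives $k = \mathbb{E}[w_\tau] = 2m\cdot(1-h(k))$, the same formula.

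\emph{Expected length.} Let $g(k) := \mathbb{E}[\tau \mid w_1=k]$, where $\tau$ counts the steps taken before absorption. The geometric tail bound above shows $g$ is finite. The boundary values are $g(0)=g(2m)=0$, and for $k\in[2m-1]$,
\[
g(k) = 1 + \tfrac12 g(k-1) + \tfrac12 g(k+1).
\]
The function $g(k) = k(2m-k)$ satisfies this recurrence, since its second difference is $-2$, and it meets both boundary conditions. The difference of any two solutions is harmonic with zero boundary values, hence identically zero, so this is the unique solution. A martingale alternative is to apply optional stopping to $w_t^2 - t$.

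Maximizing $k(2m-k)$ over $k$ gives the value $m^2$, attained at $k=m$. Therefore the expected length is at most $m^2$ from every starting state, and hence also for any initial distribution.

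The only step needing care is justifying that $\tau$ is a.s.\ finite with finite expectation, so that first-step analysis and optional stopping are valid. The geometric tail bound handles this, and everything else is routine.
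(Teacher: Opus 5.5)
Your proof is correct and is essentially the standard first-step (discrete-harmonic) argument behind the cited Levin--Peres Proposition~2.1, which the paper invokes without reproving; with $n=2m$ it gives exactly $h(k)=(2m-k)/(2m)$ and an expected absorption time of $k(2m-k)\le m^2$. Reading ``length'' as the number of steps $\tau$ is the right interpretation: the number of states is $\tau+1$, whose mean would be $m^2+1$ at $k=m$, whereas the step count matches how the paper uses the bound in \Cref{clm:walk-length}.
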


For a (finite) random walk $W = w_1,\cdots,w_T$, where $w_T \in \set{0,2m}$ is a terminating state, define: 
\begin{align*}
	\maxwalkW &= \maxwalk{W} := \max_{t \in [T]} \set{w_t} \quad \text{and}\quad 
	\timemaxW = \timemax{W} := \min \set{t \in [T] : w_t=\maxwalk{W}},  
\end{align*}
as the \emph{largest} state seen on the walk and the \emph{first} time the walk sees its eventual largest state. 

A corollary of~\Cref{fact:ruin} is the following. 
\begin{claim}\label{clm:ruin}
	For a random walk $W$ over $\set{0,\ldots,2m}$ and any $1 \leq i \leq j < 2m$,
	\[
		\Pr_W\paren{\text{$W$ ends in $0$} \wedge \maxwalk{W} = j \mid \text{$W$ starts at $i$}} = \frac{i}{j \cdot (j+1)}. 
	\]
\end{claim}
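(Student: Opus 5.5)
The plan is to write the target event as a difference of two monotone events and evaluate each with the ruin probabilities of \Cref{fact:ruin}. For a walk started at $i$, the event $\{\text{$W$ ends in $0$} \wedge \maxwalk{W} = j\}$ is
\[
\{\text{$W$ ends in $0$} \wedge \maxwalk{W} \geq j\} \setminus \{\text{$W$ ends in $0$} \wedge \maxwalk{W} \geq j+1\}.
\]
So it suffices to compute $q(k) := \Pr_W\paren{\text{$W$ ends in $0$} \wedge \maxwalk{W} \geq k \mid \text{$W$ starts at $i$}}$ for $k \in \{j, j+1\}$. Note that $j+1 \leq 2m$.

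For $k \geq i$, the event ``$\maxwalk{W} \geq k$'' is exactly the event that the walk hits $k$ before terminating. I would apply the strong Markov property at the first hitting time of $k$, and split $q(k)$ into two factors.

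\textbf{First factor.} The walk must hit $k$ before $0$. The walk stopped on $\{0,k\}$ is a gambler's ruin walk on $\{0,\ldots,k\}$. Applying \Cref{fact:ruin} on this interval (the fact is stated for $2m$, but the same formula holds on any interval $\{0,\ldots,N\}$; for odd $k$ I would note the standard proof works verbatim), this probability is $1 - \frac{k-i}{k} = \frac{i}{k}$. This also covers $k = i$, where it equals $1$.

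\textbf{Second factor.} Starting afresh from $k$, the walk must end at $0$ rather than $2m$. By \Cref{fact:ruin}, this probability is $\frac{2m-k}{2m}$.

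Hence $q(k) = \frac{i}{k}\cdot\frac{2m-k}{2m} = \frac{i}{k} - \frac{i}{2m}$, and this holds for $k = j+1 = 2m$ as well, giving $q(2m)=0$. Subtracting,
\[
q(j) - q(j+1) = \frac{i}{j} - \frac{i}{j+1} = \frac{i}{j(j+1)}.
\]

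The only delicate step is the first factor. The ruin formula must be invoked on the sub-interval $\{0,\ldots,k\}$, where $k$ need not be even. To justify this, I would observe that the walk before absorption in $\{0,k\}$ coincides with the original walk, since it has not yet reached $2m \geq k$. I would then re-derive the ruin probability on $\{0,\ldots,k\}$ via the martingale $w_t$ and optional stopping, using that the expected stopping time is finite.
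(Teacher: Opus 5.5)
Your argument is correct, but you organize the computation differently from the paper. The paper writes the event directly as a product, using the strong Markov property at the first visit to $j$:
\begin{itemize}
\item the walk must reach $j$ before $0$, which has probability $i/j$ by the ruin formula on $\set{0,\ldots,j}$;
\item from $j$, it must then reach $0$ before $j+1$, which has probability $1/(j+1)$ by the ruin formula on $\set{0,\ldots,j+1}$.
\end{itemize}
In that route the outer boundary $2m$ never appears. You instead compute the tail quantity $q(k)=\frac{i}{k}-\frac{i}{2m}$ for $k\in\set{j,j+1}$ and subtract, so the dependence on $2m$ enters and then cancels.

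The paper's route is a single product and makes it transparent that the answer does not depend on $m$. Yours gives, as a byproduct, the joint tail $\Pr\paren{\text{$W$ ends in $0$} \wedge \maxwalk{W}\geq k}$ for every $k$. Its second factor also needs the ruin formula only on the original interval $\set{0,\ldots,2m}$.

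Both proofs use the ruin formula on sub-intervals whose length may be odd. The paper does this implicitly by citing \Cref{fact:ruin}. You flag it and justify it via optional stopping, which is a small point in favor of your write-up.
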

\begin{proof}
	For this to happen, we need $W$ to reach $j$ before $0$, and once on $j$, to reach $0$ before $j+1$. Given the random walk is memoryless, these two events are independent and thus we have
	\begin{align*}
		&\Pr_W\paren{\text{$W$ ends in $0$} \wedge \maxwalk{W} = j \mid \text{$W$ starts at $i$}} = \\
		&\hspace{15pt} \Pr_W\paren{\text{$W$ reaches $j$ before $0$} \mid \text{$W$ starts at $i$}} \cdot  \Pr_W\paren{\text{$W$ reaches $0$ before $j+1$} \mid \text{$W$ starts at $j$}}. 
	\end{align*}
	The first term is about a random walk on $\set{0,\ldots,j}$ starting from $i$ and thus by~\Cref{fact:ruin} has probability $i/j$. The second term is about a random walk on $\set{0,\ldots,j+1}$ starting from $j$
	and thus by~\Cref{fact:ruin} has probability $1/(j+1)$. Multiplying these two concludes the proof. 
\end{proof}
We also have the following standard claim on the length of random walks. 
\begin{claim}\label{clm:walk-length}
	The probability that a random walk $W$ starting from any state does not terminate within $2m^3$ steps is at most $2^{-m}$. 
\end{claim}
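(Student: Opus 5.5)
The plan is to split the $2m^3$ steps into $2m$ consecutive blocks of $m^2$ steps each. Within each block we use the expected-length bound from \Cref{fact:ruin} together with Markov's inequality, and then chain the blocks via the memorylessness of the walk.

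\emph{Step 1 (one block).} Fix any starting state $k \in \set{0,\ldots,2m}$. If $k \in \set{0,2m}$ the walk has already terminated. Otherwise, \Cref{fact:ruin} says the expected number of steps before termination is at most $m^2$. Let $\tau$ denote this termination time. Markov's inequality gives
\[
	\Pr\paren{\tau > 2m^2 \mid \text{$W$ starts at $k$}} \leq \frac{m^2}{2m^2} = \frac{1}{2}.
\]
This bound is uniform over all starting states $k$.

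\emph{Step 2 (chaining).} It is cleanest to use blocks of length $2m^2$, which gives exactly $m$ blocks in $2m^3$ steps. Let $E_r$ be the event that the walk has not terminated by the end of block $r$. Condition on $E_{r-1}$ and on the state $w$ at the end of block $r-1$. By the Markov property, the rest of the walk is a fresh random walk started at $w$, so Step 1 gives $\Pr\paren{E_r \mid E_{r-1}, w} \leq 1/2$. Averaging over $w$ yields $\Pr\paren{E_r \mid E_{r-1}} \leq 1/2$. Multiplying over the $m$ blocks gives
\[
	\Pr\paren{E_m} \leq 2^{-m},
\]
which is the claim.

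The only point needing care is that the bound in Step 1 must be uniform in the starting state, so that it applies after conditioning at each block boundary. \Cref{fact:ruin} states the expectation bound for every starting state, so this holds. Nothing else here is a real obstacle.
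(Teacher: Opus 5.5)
Your proof is correct and is essentially the paper's argument: Markov's inequality applied to the at-most-$m^2$ expected termination time (uniform over starting states) gives failure probability at most $1/2$ per block of $2m^2$ steps, and the Markov property chains $m$ such blocks to get $2^{-m}$. Drop the opening sentence about $2m$ blocks of $m^2$ steps, since Markov's inequality gives only the trivial bound $1$ for blocks of length $m^2$. The argument you actually carry out uses $m$ blocks of $2m^2$, which is the right choice.
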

\begin{proof}
	The expected time for a walk $W$ to terminate, starting from any state, is at most $m^2$ by~\Cref{fact:ruin}. 
	Thus, by Markov bound, the probability that the walk has not terminated after $2m^2$ steps is at most half. Repeating the argument from current state, 
	the probability that $m$ repetitions, each of length $2m^2$, does not terminate the walk is at most $2^{-m}$. 
\end{proof}

\clearpage

% !TeX root = main.tex
%!TEX root = main.tex

\section{Random Walk on a Line with Symmetric Maxima}\label{sec:random-walk}

Fix any integer $m \geq 1$. We will be using the gambler's ruin random walk on $\set{0,\ldots,2m}$ described in~\Cref{sec:prelim} for our proofs. 
For our purpose, we also need to specify a distribution over initial states with the following two properties: (1) the distribution of $\maxwalk{W}$ is symmetric (on the interior states) assuming the start state is sampled from this distribution, 
and, (2) the probability of terminating in state $0$ is maximized. The following lemma finds such a distribution for us (we do not explicitly prove that it maximizes the term in $(2)$ and only 
calculate its value, which is sufficient for us; however, one can prove it is maximized). 

\begin{lemma}\label{lem:start-state}
	Consider the assignment $\mu: \set{1,\ldots,2m-1} \rightarrow [0,1]$ such that for $i \in [m]$,
	\[
		\mu(i) := \gamma_{m} \cdot \frac{2 \cdot (2m+1)}{(2m-i)\cdot(2m-i+1)\cdot(2m-i+2)} \quad \text{for} \quad \gamma_m := \frac{2m \cdot (m+1)}{3m+1},
	\]
	and for $i \in [m+1:2m-1]$, $\mu(i)=0$. 
	
	Firstly, $\mu$ is a distribution. Secondly, if we sample a random walk $W$ starting from a state distributed according to $\mu$---denoting the combined distribution by $\dist_m$---, then, for all $j \in [2m-1]$, 
	\[
		\Pr_{W \sim \dist_m}\paren{\text{$W$\!~ends in $0$} ~\wedge~ \maxwalk{W} = j} = \Pr_{W \sim \dist_m}\paren{\text{$W$\!~ends in $0$} ~\wedge~ \maxwalk{W} = 2m-j}. 
	\]
	Finally, 
	\[
		\Pr_{W \sim \dist_m}\paren{\text{$W$\!~ends in $0$}} = \frac{2m}{3m+1}. 
	\]
\end{lemma}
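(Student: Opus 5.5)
The plan is to reduce everything to the partial sums $S(j) := \sum_{i \leq j} i \cdot \mu(i)$ using \Cref{clm:ruin}. Throughout, I treat $\mu$ as a nonnegative weight function and do \emph{not} assume it sums to one; normalization will come out at the end. Define
\[
	f(j) := \sum_i \mu(i) \cdot \Pr_W\paren{\text{$W$ ends in $0$} \wedge \maxwalk{W} = j \mid \text{$W$ starts at $i$}}.
\]
The walk's maximum is at least its start, so only $i \leq j$ contribute, and $\mu$ vanishes above $m$. Hence \Cref{clm:ruin} gives
\[
	f(j) = \frac{S(\min(j,m))}{j(j+1)}.
\]
In particular, $f(j) = S(m)/(j(j+1))$ for all $j \geq m$.

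\emph{Symmetry.} The case $j=m$ is trivial, and the condition is symmetric under $j \leftrightarrow 2m-j$, so it suffices to treat $j<m$. There the required equality $f(j) = f(2m-j)$ reads
\[
	S(j) = S(m)\cdot \frac{j(j+1)}{(2m-j)(2m-j+1)}.
\]
This holds at $j=0$ and $j=m$. So it is equivalent to verifying the increments $i\mu(i) = S(i)-S(i-1)$ for $i \in [m]$, with $S(m)$ playing the role of $\gamma_m$. I would compute the difference of the two consecutive ratios. Writing $a = 2m-i$, the numerator simplifies as
\[
	(i+1)(a+2)-(i-1)a = 2a+2i+2 = 2(2m+1).
\]
This gives
\[
	i\mu(i) = S(m)\cdot\frac{2i(2m+1)}{(2m-i)(2m-i+1)(2m-i+2)},
\]
which is exactly the stated formula once $S(m)=\gamma_m$. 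So I define $\mu$ by the formula, check by telescoping that $S(m) = \gamma_m$ (the ratio equals $1$ at $j=m$), and symmetry follows. Nonnegativity of $\mu$ is clear.

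\emph{Termination probability.} The total weight ending at $0$ is $\sum_j f(j)$. Using symmetry to replace $\sum_{j<m} f(j)$ by $\sum_{j>m} f(j)$, and summing the telescoping series $\sum_{j=m+1}^{2m-1} \frac{1}{j(j+1)} = \frac1{m+1} - \frac1{2m}$, I get
\[
	\sum_j f(j) = \gamma_m\left(\frac{2}{m+1} - \frac1m + \frac1{m(m+1)}\right) = \frac{\gamma_m}{m+1} = \frac{2m}{3m+1}.
\]

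\emph{Normalization.} This is the step I expected to be the main obstacle, since summing $\mu$ directly is messy. I would avoid it with a double-counting trick: compute the same total weight a second way via \Cref{fact:ruin}. Let $M := \sum_i \mu(i)$. Then
\[
	\sum_i \mu(i)\cdot\frac{2m-i}{2m} = M - \frac{\gamma_m}{2m}.
\]
Equating the two expressions gives
\[
	M = \gamma_m\left(\frac1{m+1}+\frac1{2m}\right) = \gamma_m\cdot\frac{3m+1}{2m(m+1)} = 1.
\]
So $\mu$ is a distribution, and the computed weight is indeed the probability $\frac{2m}{3m+1}$.
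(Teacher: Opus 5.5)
Your proof is correct. The symmetry part matches the paper's: \Cref{clm:partial-sum} proves $P_j=\gamma_m\frac{j(j+1)}{(2m-j)(2m-j+1)}$ by induction, using exactly your identity $(j+1)(a+2)-(j-1)a=2(2m+1)$ with $a=2m-j$, and then compares $P_j/(j(j+1))$ with $P_m/((2m-j)(2m-j+1))$.

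You genuinely differ on the other two parts, which you do in the opposite order. The paper proves normalization directly, and it is not actually messy. With $r=2m-i$, the partial fraction $\frac{2}{r(r+1)(r+2)}=\frac{1}{r(r+1)}-\frac{1}{(r+1)(r+2)}$ telescopes to $(2m+1)\bigl(\frac{1}{m(m+1)}-\frac{1}{2m(2m+1)}\bigr)$. The termination probability then follows in one line from \Cref{fact:ruin} as $1-P_m/(2m)=1-\gamma_m/(2m)$.

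You instead compute the total mass ending in $0$ by summing over the maximum. You fold $j<m$ onto $j>m$ via symmetry, where $f(j)=\gamma_m/(j(j+1))$ is explicit. You then recover $\sum_i\mu(i)=1$ by equating with the \Cref{fact:ruin} expression $M-\gamma_m/(2m)$. This double counting is rigorous for arbitrary nonnegative weights. For a walk started in $[m]$, the events $\{\text{ends in }0,\ \maxwalk{W}=j\}$ for $j\in[2m-1]$ partition $\{\text{ends in }0\}$, and the finite sums can be exchanged.

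What each route buys:
\begin{itemize}
\item Yours shows that normalization is forced by the symmetric shape of the partial sums plus the consistency of \Cref{clm:ruin} with \Cref{fact:ruin}. This explains why $\gamma_m$ is the right constant, and it avoids the three-factor partial fraction.
\item The paper's is more modular. Normalization is a purely algebraic fact independent of the walk, and the termination probability follows from $P_m=\gamma_m$ alone, without the sum over maxima.
\end{itemize}
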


We postpone the proof of this lemma to the end of this section to complete our definitions first. The proof can be skipped by the reader as it is not needed for understanding the rest of the paper. 

\paragraph{Delayed and length-restricted random walks.} \Cref{lem:start-state} ``symmetrizes'' the distribution of maxima for a random walk $W$. We also need to ensure the distribution of the time of reaching 
the maxima is (essentially) the same across different choices for $\maxwalk{W}$. Adding a ``delay'' to the walk fixes this issue as well.  

A \textbf{delayed random walk} with delay $d \geq 1$ (also called $d$-delayed random walk) is defined similar to our original random walk, except that for its first $d-1$ time stamps, the walk 
stays in the same state and starts moving as before from time step $d$. Specifically, a $d$-delayed random walk $W=w_1,w_2,\cdots,$ is such that 
\[
	w_1 = \cdots = w_d \quad \text{and} \quad \text{for $t \geq d$,}~ w_{t+1}~\text{is chosen uniformly from}~\set{w_t + 1 , w_t-1}, 
\]
and the walk as before terminates when hitting the state $0$ or $2m$. We refer to the ``non-delayed'' part of the walk $W$ as the \emph{inner} random walk $W_{in}$, namely, $W_{in} = w_d, w_{d+1},\cdots$. 

For a $d$-delayed random walk $W$, we define 
\begin{align}
\timemax{W} = (d-1)+\timemax{W_{in}}, \label{eq:timemaxW}
\end{align}
 namely, the time the \emph{inner} walk reached the eventual maximum (this distinction between inner and the entire walk is only relevant when the starting state is the eventual maximum; in this case, 
we consider $\timemax{W}=d$, namely, the delay of the walk instead of time $1$); $\maxwalk{W}$ is defined as before which is equal to $\maxwalk{W_{in}}$ regardless. 

Finally, we also need the length of the walks to be finite, which can be guaranteed (essentially) by~\Cref{clm:walk-length}.
We say that a delayed random walk is \textbf{$\ell$-length restricted} iff 
it terminates within $\ell$ steps \emph{after} spending its delayed time, i.e., reaches states $0$ or $2m$ within $d+\ell$ steps assuming its delay was $d$ (or alternatively, the length of its inner walk $W_{in}$ is at most $\ell$). 
By~\Cref{clm:walk-length}, for large enough $\ell$, a (delayed) random walk will be $\ell$-length restricted with high probability (in $m$).

\subsection{Proof of~\Cref{lem:start-state}}
We remark that the ``right'' way to prove a statement of the type~\Cref{lem:start-state} is to
solve the program that maximizes $\Pr(\text{$W$ ends in $0$})$ subject to the symmetric constraints, under variables $\set{\mu(i)}_{i=1}^{2m-1}$ which are non-negative and sum to one. 
This becomes a linear programming problem which can then be solved explicitly (by considering its dual), to determine the values of $\mu(i)$'s. Nevertheless, since we have
already solved this program and obtained $\mu(i)$'s, in the following proof we only verify the correctness of these variables. 

We prove each part of~\Cref{lem:start-state} in the following. 

\paragraph{$\mu$ forms a distribution.} Clearly, all $\mu(i) \geq 0$ so we only need to prove $\sum_i \mu(i)=1$. For this, 
\begin{align*}
	\sum_i \mu(i) &= \gamma_m \cdot \sum_{i=1}^{m}  \frac{2 \cdot (2m+1)}{(2m-i)\cdot(2m-i+1)\cdot(2m-i+2)} \\
	&= \gamma_m \cdot \sum_{r=m}^{2m-1} \frac{2 \cdot (2m+1)}{r \cdot (r+1)\cdot(r+2)} \tag{by reindexing the sum with $r=2m-i$} \\
	&= \gamma_m \cdot \sum_{r=m}^{2m-1} (2m+1) \cdot \paren{\frac{1}{r \cdot (r+1)} - \frac{1}{(r+1)\cdot(r+2)}} \tag{by a direct calculation} \\
	&= \gamma_m \cdot (2m+1) \cdot \paren{\frac{1}{m \cdot (m+1)} - \frac{1}{(2m)\cdot(2m+1)}} \tag{by the telescoping sum} \\
	&= \gamma_m \cdot \frac{3m+1}{2m \cdot (m+1)} = 1, 
\end{align*}
by the definition of $\gamma_m$. Thus, $\mu$ is a distribution. 

\paragraph{Symmetric property.} To simplify the notation, in the following, we use `$w_T=0$' to denote the event that the walk ends in $0$ (instead of $2m$) and use `$w_1=i$' for any integer $i$ to mean
the walk starts from the state $i$. For any $j \in [2m-1]$, 
\begin{align}
	\Pr_{W \sim \dist_m}\paren{w_T=0 ~\wedge~ \maxwalk{W} = j} &= \sum_{i=1}^{m} \mu(i) \cdot \Pr\paren{w_T=0 ~\wedge~ \maxwalk{W} = j \mid w_1=i} \tag{as the 
	walk needs to start somewhere in $[m]$} \\
	&=  \sum_{i=1}^{\min(m,j)} \mu(i) \cdot \Pr\paren{w_T=0 ~\wedge~ \maxwalk{W} = j \mid w_1=i} \tag{as otherwise starting from $i > j$ ensures $\maxwalk{W}=j$ never happens} \\
	&=  \sum_{i=1}^{\min(m,j)} \mu(i) \cdot \frac{i}{j \cdot (j+1)}, \label{eq:get-back-to}
\end{align}
by~\Cref{clm:ruin} since $1 \leq i \leq j \leq 2m$. 
We have the following claim on partial sums  above. 
\begin{claim}\label{clm:partial-sum}
	For any $ j \in [m]$, 
	\[
		P_j := \sum_{i=1}^{j} \mu(i) \cdot i = \gamma_m \cdot \frac{j \cdot (j+1)}{(2m-j) \cdot (2m-j+1)}. 
	\]
\end{claim}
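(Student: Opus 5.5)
We prove the closed form for $P_j$ by induction on $j \in [m]$, checking that the increments of the claimed formula match the summands $\mu(i)\cdot i$. Write $F(j) := \gamma_m \cdot \frac{j(j+1)}{(2m-j)(2m-j+1)}$ and set $F(0)=0$. It then suffices to show $F(j) - F(j-1) = \mu(j)\cdot j$ for every $j \in [m]$; summing these telescoping differences from $1$ to $j$ gives $P_j = F(j)$.

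\textbf{Base case.} For $j=1$ we have $\mu(1)\cdot 1 = \gamma_m \cdot \frac{2(2m+1)}{(2m-1)(2m)(2m+1)} = \gamma_m\cdot\frac{2}{(2m-1)2m}$. This equals $F(1) = \gamma_m \cdot \frac{2}{(2m-1)(2m)}$, as required.

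\textbf{Inductive step.} Divide through by $\gamma_m$. We compute
\[
\frac{j(j+1)}{(2m-j)(2m-j+1)} - \frac{(j-1)j}{(2m-j+1)(2m-j+2)}.
\]
Over the common denominator $(2m-j)(2m-j+1)(2m-j+2)$, the numerator is
\[
j\big[(j+1)(2m-j+2) - (j-1)(2m-j)\big].
\]
Expand the bracket:
\[
(j+1)(2m-j+2) = 2mj - j^2 + 2j + 2m - j + 2, \qquad (j-1)(2m-j) = 2mj - j^2 - 2m + j.
\]
Their difference is $4m + 2 = 2(2m+1)$. Hence
\[
F(j)-F(j-1) = \gamma_m \cdot \frac{2(2m+1)\, j}{(2m-j)(2m-j+1)(2m-j+2)} = j\cdot\mu(j),
\]
which is exactly the definition of $\mu(j)$ for $j \in [m]$.

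The only step that could go wrong is the algebra identifying this difference; the expansion above resolves it, since the numerator collapses to the constant $2(2m+1)$. That constant is precisely the factor built into the definition of $\mu$. Note also that the restriction $j \le m$ matters, because $\mu(j)$ has the displayed form only for $j \le m$.
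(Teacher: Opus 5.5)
Your proof is correct and follows essentially the same route as the paper's: induction on $j$ with the same base case, and an inductive step that reduces to the identity $(j+1)(2m-j+2) - (j-1)(2m-j) = 2(2m+1)$, which is exactly the identity the paper verifies. Phrasing it as a telescoping sum with $F(0)=0$ makes no substantive difference (and in fact makes your separate base case redundant, since the general increment formula already covers $j=1$).
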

\begin{proof}
	The proof is by induction. For $j=1$, 
	\[
		P_1 = \mu(1) = \gamma_{m} \cdot \frac{2 \cdot (2m+1)}{(2m-1)\cdot(2m)\cdot(2m+1)} = \gamma_{m} \cdot \frac{1 \cdot (1+1)}{(2m-1)\cdot(2m)}, 
	\]
	matching the RHS of the claim. For $1  < j \leq m$, for the induction step, by the value of $\mu(j)$, 
	\begin{align*}
		 P_j = P_{j-1} + j \cdot \mu(j) &= \gamma_m \cdot \frac{(j-1) \cdot j}{(2m-j+1) \cdot (2m-j+2)} + j \cdot  \gamma_{m} \cdot \frac{2 \cdot (2m+1)}{(2m-j)\cdot(2m-j+1)\cdot(2m-j+2)} \\
		&= \gamma_m \cdot \frac{{j \cdot (j-1) \cdot (2m-j)+2j \cdot (2m+1)}}{(2m-j)\cdot(2m-j+1)\cdot(2m-j+2)} \tag{by multiplying the first fraction with $(2m-j)/(2m-j)$}\\
		&= \gamma_m \cdot \frac{j \cdot (j+1)}{(2m-j) \cdot (2m-j+1)}, 
	\end{align*}
	where the last equation can be verified using a direct calculation by verifying that 
	\[
		(2m-j+2) \cdot (j+1) = (j-1) \cdot (2m-j)+2\cdot (2m+1). 
	\] 
	This concludes the proof. \Qed{clm:partial-sum}
	
\end{proof}
Plugging in~\Cref{clm:partial-sum} in~\Cref{eq:get-back-to}, we have that for any $j \in [m]$, 
\[
	\Pr_{W \sim \dist_m}\paren{w_T=0 ~\wedge~ \maxwalk{W} = j} = \frac{P_j}{j \cdot (j+1)} = \gamma_m \cdot \frac{1}{(2m-j) \cdot (2m-j+1)}.
\]
On the other hand, for $j \in [m]$, we have $2m-j \geq m$ and by the max-term in~\Cref{eq:get-back-to}, 
\[
	\Pr_{W \sim \dist_m}\paren{w_T=0 ~\wedge~ \maxwalk{W} = 2m-j} = \frac{P_m}{(2m-j) \cdot (2m-j+1)} = \gamma_m \cdot \frac{1}{(2m-j) \cdot (2m-j+1)}, 
\]
since by~\Cref{clm:partial-sum}, $P_m =\gamma_m$. This proves the equality in the symmetric property. 

\paragraph{Probability of $W$ ending in $0$.} We calculate this directly as 
\begin{align*}
	\Pr_{W \sim \dist_m}\paren{w_T=0} &= \sum_{i=1}^{m} \mu(i) \cdot \Pr_{W}\paren{w_T=0 \mid w_1=i}	\tag{by the distribution of the start state} \\
	&= \sum_{i=1}^{m} \mu(i) \cdot \Pr_{W}\paren{\text{$W$ reaches $0$ before $2m$} \mid w_1=i} \tag{as otherwise the walk terminates in $2m$ instead} \\
	&= \sum_{i=1}^{m} \mu(i) \cdot \frac{2m-i}{2m} \tag{by~\Cref{fact:ruin}} \\
	&= 1- \sum_{i=1}^{m} \frac{i \cdot \mu(i)}{2m} \tag{since $\mu$ is a distribution so $\sum_i \mu(i)=1$} \\
	&= 1- \frac{\gamma_m}{2m} \tag{as the numerator in the sum-term is $P_m$ which is $\gamma_m$ by~\Cref{clm:partial-sum}} \\
	&= \frac{2m}{3m+1},
\end{align*}
by the value of $\gamma_m$. This concludes the proof of~\Cref{lem:start-state}. \qed

\clearpage

% !TeX root = main.tex
%!TEX root = main.tex

\section{The Blueprint Construction}\label{sec:construct-blueprint}

We provide our blueprint construction in this section, proving the following theorem. 
\begin{theorem*}[Restatement of~\Cref{thm:blueprint}]
	For any sufficiently large integer $m$, there exists a proper blueprint $\bG_m$ with parameters $P=\poly(m)$ and $C=m^{O(m)}$ such that 
	\[
		value(\bG_m) = \frac{2}{3} - \Theta\!\paren{\frac{1}{m}}. 
	\]
\end{theorem*}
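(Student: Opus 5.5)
We follow the construction sketched in \Cref{sec:ideas} and make it quantitative. Fix $m$, a length bound $\ell := 2m^3$ and a maximum delay $D$ (a large polynomial in $m$, say $D = m^{5}$). Set $P := 2 + D + \ell$ and identify a label $x \in [C]^P$ with three pieces of randomness. The coordinate $x_1$ encodes the start state: partition $[C]$ into intervals of lengths $\approx \mu(i)\cdot C$ with $\mu$ from \Cref{lem:start-state}. The coordinate $x_2$ encodes a delay $d \in [D]$ by partitioning $[C]$ into $D$ equal parts. The remaining coordinates encode steps (first half of $[C]$ means $+1$, second half $-1$). Since $\mu(i)$ is rational with denominator $\poly(m)$-bounded and there are $O(m)$ such values, we can take $C = m^{O(m)}$ divisible by everything needed, so the simulation is exact. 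We must also arrange that the step at time $t$ of the delayed walk is read from coordinate $t+1$ (or a fixed offset), so that $x_{\le t}$ determines the prefix of the walk up to time $t$. This is why the delay must be read before the step coordinates.

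Define $W(x)$ as the $d$-delayed walk and $\barW(x)$ as its reflection. Let $\bLstar_{j,t}$ be the set of $\bL(x)$ such that $W(x)$ is $\ell$-length restricted, ends at $0$, has $\maxwalk{W(x)}=j$ and $\timemax{W(x)}=t$. Define $\bRstar_{2m-j,t}$ analogously via $\barW$. Add an arbitrary matching of size $\min$ between them into $\bEE{t'}$, where $t'$ is the coordinate index immediately after the one that determines time $t$.

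Properness is the argument of \Cref{sec:ideas}. Fixing $x_{<t'}$ and $y_{<t'}$ fixes the start, the delay and the walk prefix through time $t$. Since $\timemax = t$ is a \emph{first} visit to the maximum, the prefix is at state $j$ (resp.\ $2m-j$). The mirrored suffix $z$ then sends one walk to $0$ exactly when it sends the other to $2m$. So at most one of $\bL(x_{<t'}\conc z)$ and $\bR(y_{<t'}\conc z)$ lies in $\bLstar\cup\bRstar$, which are the only vertices with edges. The matching constraint holds because the sets are disjoint.

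For the value, we lower bound $\sum_{j,t}\min(|\bLstar_{j,t}|,|\bRstar_{2m-j,t}|)$. Write $q_j$ for the common value from \Cref{lem:start-state}: $q_j = \Pr(\text{ends in }0,\ \maxwalkW=j)$, which also equals the probability for the reflected walk with max $2m-j$. This holds because the reflection $\barW$ is an ordinary walk from the same start, and the event ``$\barW$ ends at $0$ with max $2m-j$'' is just ``$W$ ends at $0$ with max $2m-j$'' in distribution.

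Conditioned on the max event and on the inner time-to-max $s \le \ell$, the delay is uniform on $[D]$. So $\timemax = d-1+s$ is uniform on a window of length $D$ shifted by $s \le \ell$. Hence for every $t$, the conditional mass on $\{t\}$ differs between the $j$ side and the $2m-j$ side only for $t$ in the two boundary strips of width $\ell$. Summing, $\sum_t|\,|\bLstar_{j,t}|-|\bRstar_{2m-j,t}|\,| \le 2(\ell/D)\,q_j C^P$, up to the $2^{-m}$ loss from length restriction (\Cref{clm:walk-length}). Therefore
\[
value(\bG_m) \ge \frac{2m}{3m+1} - O(\ell/D) - O(2^{-m}) = \frac{2}{3}-\Theta(1/m)
\]
with $D = m^{5}$. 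The value is also at most $\Pr(\text{ends in }0) = \frac{2m}{3m+1}$, which gives the matching $\Theta$ upper bound. Then $P = \poly(m)$.

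The main obstacle is the bookkeeping linking coordinates to times. We must check that the set of vertices with edges is exactly determined by $x$, that ``banned together'' uses the prefix $x_{<p}$ with $p = t'$ consistently, and that the delayed-walk convention $\timemax = d$ when the start state is the maximum still puts the walk at state $j$ by time $t$. I would handle the last point by verifying that in that case the start coordinate $x_1$ alone fixes the state, so it is already inside the fixed prefix.
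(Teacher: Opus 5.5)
Your proposal is correct and is essentially the paper's proof: the same start/delay/step encoding of $x$, matchings between $\bL_{j,t}$ and $\bR_{2m-j,t}$ placed in $\bEE{t+2}$, the reflection argument for the ban constraint, uniform delays to equalize the time-to-maximum, and the upper bound $value(\bG_m)\le \frac{2m}{3m+1}$ to get the $\Theta(1/m)$ gap. The differences are cosmetic: you take $\ell=2m^3$ and $D=m^5$ rather than $L=m^4$ and $D=2mL$, and you add matchings for every $t$, absorbing the boundary strips via $\min(a,b)=\frac12(a+b-|a-b|)$, whereas the paper adds edges only for $t\in[L:D]$, where \Cref{lem:bjt-reflect} gives near-equality directly.
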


We start by presenting the setup needed for this construction, followed by the construction itself and its properness proof, and finally calculate its value to conclude the proof of~\Cref{thm:blueprint}. 

\subsection{Setup} 
Let $m$ be a large integer that is used to define all our other parameters for the blueprint. We define a blueprint $\bG := \bG(m)$ in this section. 
Define: 
\begin{align*}
L &:= m^4, \tag{used to determine the \emph{length} of random walks} \\
D &:= 2m \cdot L, \tag{used to determine the (range of) \emph{delays} in random walks} \\
P &:= L+D+1 \tag{parameter $P$ of the blueprint} \\
C &:= \text{sufficiently large as function of $m$ to be fixed later} \tag{parameter $C$ of the blueprint}. 
\end{align*}
Recall that vertices of $\bG$ are $\bL,\bR$ and they are labeled by $[C]^{P}$. We partition indices of $P$ as follows: 
\begin{gather*}
	P := (\pstart , \pdelay , \pwalk) \quad \text{where} \\ 
	\pstart=1, \quad \pdelay=2, \quad \text{and,} \quad {\pwalk} = [3: P]. 
\end{gather*}

We use these indices to associate a delayed random walk to vertices of the blueprint as follows. 

\begin{definition}\label{def:W(x)}
Sample $x \in [C]^P$ uniformly at random and define a \textbf{delayed random walk} $W = W(x) = w_1,w_2,\cdots,w_T$ for some $T \leq D+L$ as follows\footnote{It is possible that the walk does not terminate and in that case we only consider its $D+L$ first states.}.
\begin{enumerate}[itemsep=3pt]
\item\label{line:pstart} $x_{\pstart}$ determines the start state $w_1$ of the random walk $W$: we partition $[C]$ into intervals of proportional length $\mu(1),\mu(2),\ldots,\mu(m)$ (as defined in~\Cref{lem:start-state}), 
and pick $w_1$ depending on which interval $x_{\pstart}$ lands in. 
\item\label{line:pdelay} $x_{\pdelay}$ determines the delay of the delayed random walk $W$: we partition $[C]$ into $D$ equal-length intervals and the delay of the random walk $d(x)=d$
if $x_{\pdelay}$ lands in $d$-th interval.  
\item\label{line:reflection} $x_{\pwalk}$ determines the inner walk: for any time $t \geq d(x)$, $x_{t+2}$ determines the transition from $w_t$ to $w_{t+1}$ in $W$, where
$w_{t+1} = w_t -1$ if $x_{t+2} \in [1:{C}/{2}]$ and $w_{t+1}=w_t + 1$ otherwise.
The walk terminates whenever it hits $0$ or $2m$ and the subsequent states remain the same. 
\end{enumerate}

We also define the \textbf{reflection walk} of $x$, denoted by $\barW(x)$, exactly as $W(x)$ except for Line~\eqref{line:reflection}: we now increase the state if $x_i \in [1:C/2]$ and decrease it otherwise (the exact opposite rule as in the definition of $W$).  
\end{definition}
See~\Cref{fig:W(x)} for an illustration of this definition and further help with parsing it. 

\clearpage
Before getting to the main part, a quick remark is in order. 
\begin{remark}
We pick $C$ such that 
\begin{itemize}
	\item $\mu(i) \cdot C$ is an integer for all $i \in [2m-1]$ (so the distribution of the start point in Line~\eqref{line:pstart} 
	matches $\mu$);  
	\item $D | C$ (so the distribution of the delay in Line~\eqref{line:pdelay} is uniform over $[D]$); 
	\item $C$ is even (so the distribution of the inner walk in Line~\eqref{line:reflection} is a standard random walk).  
\end{itemize}
	We can thus set 
	\[
		C = 2D \cdot \prod_{i=1}^{m} \textnormal{denominator of $\mu(i)$}
	\]
	and given the values of parameters $\mu(i)$ in~\Cref{lem:start-state} and since $D= \poly(m)$, we have $C=m^{O(m)}$. 
\end{remark}

\begin{figure}[t!]
\centering
\begin{tikzpicture}[
    x=0.78cm,
    y=1cm,
    >=Latex,
    line cap=butt
]

    %------------------------------------------------------------
    % Geometry
    %------------------------------------------------------------
    \def\boxheight{2.8}
    \def\boxwidth{16}

    % End of the first and second boxes
    \def\startend{1.6}
    \def\delayend{3.6}

    % Partitions inside the walk region
    \def\walkA{6.4}
    \def\walkB{7.55}
    \def\walkC{8.70}
    \def\walkD{9.85}

    % Half-length of each small horizontal interval separator
    \def\separatorhalfwidth{0.12}

    % Vertical midpoint of the large box
    \pgfmathsetmacro{\ymid}{\boxheight/2}

    % Centers of the first two boxes
    \pgfmathsetmacro{\startcenter}{\startend/2}
    \pgfmathsetmacro{\delaycenter}{(\startend+\delayend)/2}

    % Centers of the three narrow boxes
    \pgfmathsetmacro{\dcenterA}{(\walkA+\walkB)/2}
    \pgfmathsetmacro{\dcenterB}{(\walkB+\walkC)/2}
    \pgfmathsetmacro{\dcenterC}{(\walkC+\walkD)/2}

    % The walk region is everything after the delay box
    \pgfmathsetmacro{\walkcenter}{(\delayend+\boxwidth)/2}

    %------------------------------------------------------------
    % Styles
    %------------------------------------------------------------
    \tikzset{
        outer box/.style={
            draw=black,
            rounded corners=2mm,
            line width=0.65pt
        },
        partition/.style={
            draw=black,
            line width=0.55pt
        },
        colored bar/.style={
            line width=3pt,
            line cap=butt
        },
        interval separator/.style={
            draw=black,
            line width=0.55pt,
            line cap=round
        },
        small arrow/.style={
            -{Latex[length=1.8mm,width=1.35mm]},
            line width=0.9pt,
            black
        },
        upward arrow/.style={
            -{Latex[length=2.7mm,width=2.1mm]},
            line width=1.5pt,
            green!65!black,
            line cap=butt
        },
        downward arrow/.style={
            -{Latex[length=2.7mm,width=2.1mm]},
            line width=1.5pt,
            red!70,
            line cap=butt
        },
        lower label/.style={
            anchor=north,
            font=\large\itshape
        },
        set label/.style={
            font=\Large
        }
    }

    %------------------------------------------------------------
    % Clip all interior elements to the rounded outer box
    %------------------------------------------------------------
    \begin{scope}
        \clip[rounded corners=2mm]
            (0,0) rectangle (\boxwidth,\boxheight);

        %--------------------------------------------------------
        % Exact top-to-bottom partition lines
        %--------------------------------------------------------
        \foreach \x in {
            \startend,
            \delayend,
            \walkA,
            \walkB,
            \walkC,
            \walkD
        }{
            \draw[partition]
                (\x,0)
                --
                (\x,\boxheight);
        }

        %--------------------------------------------------------
        % First box: unequal interval lengths
        %--------------------------------------------------------
        \draw[colored bar,green!65!black]
            (\startcenter,0)
            --
            (\startcenter,0.55);

        \draw[colored bar,red!75]
            (\startcenter,0.55)
            --
            (\startcenter,1.90);

        \draw[colored bar,blue!60]
            (\startcenter,1.90)
            --
            (\startcenter,\boxheight);

        % Horizontal separators between the three intervals
        \foreach \y in {0.55,1.90}{
            \draw[interval separator]
                ({\startcenter-\separatorhalfwidth},\y)
                --
                ({\startcenter+\separatorhalfwidth},\y);
        }

        %--------------------------------------------------------
        % Second box: six equal intervals
        %--------------------------------------------------------
        \foreach \col [count=\i from 0] in {
            red!75,
            orange!85!black,
            cyan!65!black,
            green!60!black,
            magenta!65,
            blue!60
        }{
            \pgfmathsetmacro{\segmentbottom}{\i*\boxheight/6}
            \pgfmathsetmacro{\segmenttop}{(\i+1)*\boxheight/6}

            \draw[colored bar,\col]
                (\delaycenter,\segmentbottom)
                --
                (\delaycenter,\segmenttop);
        }

        % Horizontal separators between the six equal intervals
        \foreach \i in {1,...,5}{
            \pgfmathsetmacro{\separatorheight}
                {\i*\boxheight/6}

            \draw[interval separator]
                ({\delaycenter-\separatorhalfwidth},
                 \separatorheight)
                --
                ({\delaycenter+\separatorhalfwidth},
                 \separatorheight);
        }

        %--------------------------------------------------------
        % Three narrow boxes:
        % each is divided into equal upper and lower intervals
        %--------------------------------------------------------
        \foreach \x in {
            \dcenterA,
            \dcenterB,
            \dcenterC
        }{
            % Upper interval
            \draw[upward arrow]
                (\x,\ymid)
                --
                (\x,\boxheight);

            % Lower interval
            \draw[downward arrow]
                (\x,\ymid)
                --
                (\x,0);

            % Horizontal separator between the two intervals
            \draw[interval separator]
                ({\x-\separatorhalfwidth},\ymid)
                --
                ({\x+\separatorhalfwidth},\ymid);
        }

        %--------------------------------------------------------
        % Small left-to-right arrows
        %--------------------------------------------------------

        % First box
        \draw[small arrow]
            (0.32,1.23)
            --
            (0.59,1.23);

        % Second box
        \draw[small arrow]
            (2.08,2.15)
            --
            (2.35,2.15);

        % First narrow box
        \draw[small arrow]
            (6.57,2.02)
            --
            (6.82,2.02);

        % Second narrow box
        \draw[small arrow]
            (7.70,0.70)
            --
            (7.95,0.70);

        % Third narrow box
        \draw[small arrow]
            (8.85,0.95)
            --
            (9.10,0.95);
    \end{scope}

    %------------------------------------------------------------
    % Draw the outer boundary last
    %------------------------------------------------------------
    \draw[outer box]
        (0,0)
        rectangle
        (\boxwidth,\boxheight);

    %------------------------------------------------------------
    % External labels
    %------------------------------------------------------------
\node[set label,overlay]
    at (-0.72,\ymid)
    {$[C]$};

    \node[set label]
        at ({\boxwidth/2},{\boxheight+0.62})
        {$[P]$};

    % All lower labels share the same baseline
    \node[lower label]
        at (\startcenter,-0.1)
        {\pstart};

    \node[lower label]
        at (\delaycenter,-0.1)
        {\pdelay};

    \node[lower label]
        at (\dcenterA,-0.1)
        {$d+2$};

    \node[lower label]
        at (\walkcenter,-0.1)
        {\pwalk};

\end{tikzpicture}

\caption{An illustration of the random walk $W(x)$ in~\Cref{def:W(x)} when $m=3$ and $D=6$. The figure shows how the indices of $[P]$ are partitioned into $(\pstart,\pdelay,\pwalk)$. The intervals inside each box shows how $[C]$ is partitioned into different intervals. The tiny arrows in each box shows the value of a single point $x \in [C]^P$ on each relevant index. The walk $W(x)$ thus first considers $x_{\pstart}$ and so chooses starting state $2$. The delay $d$ is set to $5$ given $x_{\pdelay}$. Finally, the walk waits until time $d$ and then decide its next move based on the index $d+2$ and so on; in this particular example, the walk goes to $3$ next, then $2$, and then $1$. }
\label{fig:W(x)}
\end{figure}

We use walks $W(x)$ for defining subsets of $\bL$ and reflection walks $\barW(x)$ for subsets in $\bR$. Specifically, for any integer $j \in [2m-1]$ and time $t \in [D+L]$, we define
\begin{align*}
	\bL_{j,t} &:= \set{\bL(x) : \text{$W(x)$ is $L$-restricted, $W(x)$ reaches $0$, $\maxwalk{W(x)}=j$, and $\timemax{W(x)} = t$}}, \\
	\bR_{j,t} &:= \set{\bR(x) : \text{$\barW(x)$ is $L$-restricted, $\barW(x)$ reaches $0$, $\maxwalk{\barW(x)}=j$, and $\timemax{\barW(x)} = t$}}. 
\end{align*}
These sets will play the key role in the definition of edges of the blueprint (they are essentially the endpoints of the edges). 

We have the following immediate relation between the sizes of these sets. 

\begin{claim}\label{obs:bjt-direct}
	For all $j \in [2m-1]$ and $t \in [D+L]$, $\card{\bL_{j,t}} = \card{\bR_{j,t}}$.
\end{claim}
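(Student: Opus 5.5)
The claim should follow from a measure-preserving bijection $\phi : [C]^P \to [C]^P$ satisfying $\barW(\phi(x)) = W(x)$ as sequences of states. Once we have it, $\bL(x) \in \bL_{j,t}$ holds iff $\bR(\phi(x)) \in \bR_{j,t}$, and the cardinalities are equal.

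\textbf{Step 1: construct $\phi$.} Let $\sigma:[C]\to[C]$ be the involution $\sigma(c) = c + C/2$ for $c \le C/2$ and $\sigma(c) = c - C/2$ otherwise. It swaps $[1:C/2]$ and $[C/2+1:C]$; this uses that $C$ is even, as arranged in the remark on the choice of $C$. Define $\phi(x)$ by
\begin{itemize}
\item keeping $x_{\pstart}$ and $x_{\pdelay}$ unchanged;
\item applying $\sigma$ coordinatewise to every index in $\pwalk$.
\end{itemize}
Then $\phi$ is a bijection of $[C]^P$ (indeed an involution).

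\textbf{Step 2: check that $\barW(\phi(x)) = W(x)$.} Go through \Cref{def:W(x)} line by line.
\begin{itemize}
\item The start state $w_1$ depends only on $x_{\pstart}$, which is unchanged.
\item The delay $d$ depends only on $x_{\pdelay}$, which is unchanged.
\item For each time $t \ge d$, the step of $W(x)$ is $-1$ iff $x_{t+2} \in [1:C/2]$. The step of $\barW(\phi(x))$ is $+1$ iff $\sigma(x_{t+2}) \in [1:C/2]$, which holds iff $x_{t+2} \notin [1:C/2]$. Hence the step of $\barW(\phi(x))$ is $-1$ iff $x_{t+2}\in[1:C/2]$, the same as for $W(x)$.
\end{itemize}
By induction on $t$ the two walks coincide state by state, including the termination time. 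This includes the truncation at $D+L$ states mentioned in the footnote to \Cref{def:W(x)}.

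\textbf{Step 3: transfer the defining conditions.} Each condition defining $\bL_{j,t}$ and $\bR_{j,t}$ is a function of the walk's state sequence together with its delay $d$:
\begin{itemize}
\item being $L$-restricted;
\item reaching $0$;
\item $\maxwalkW = j$;
\item $\timemaxW = t$, which via \Cref{eq:timemaxW} depends on $d$ and the inner walk.
\end{itemize}
Since the delay is shared by $x$ and $\phi(x)$ and the state sequences coincide, all of these conditions agree between $W(x)$ and $\barW(\phi(x))$. Therefore $\phi$ restricts to a bijection from $\{x : \bL(x)\in\bL_{j,t}\}$ onto $\{y : \bR(y)\in\bR_{j,t}\}$, which gives $\card{\bL_{j,t}} = \card{\bR_{j,t}}$.

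The only delicate point is the bookkeeping in Step 3. The delay must be read from $x_{\pdelay}$, which $\phi$ fixes, and not inferred from the states; otherwise a walk that sits at its maximum during the delay could make $\timemaxW$ ambiguous. Once this is handled, the argument is immediate.
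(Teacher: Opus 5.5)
Your proof is correct and is essentially the paper's argument. The paper likewise uses a bijection of $[C]^P$ that fixes the start and delay coordinates and swaps the two halves of $[C]$ on every walk coordinate. The only difference is that it uses the reflection $x_i \mapsto C - x_i + 1$ instead of your half-shift $\sigma$, and both work as long as $C$ is even.
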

\begin{proof}
	The only difference between $\bL_{j,t}$ and $\bR_{j,t}$ is in Line~\eqref{line:reflection} of~\Cref{def:W(x)}: we thus have $\bL(x) \in \bL_{j,t}$ iff $\bR(\overline{x}) \in \bR_{j,t}$, where
	$\overline{x}$ is defined to be the same as $x$ in $[P] \setminus \pwalk$ and for $i \in \pwalk$, $\overline{x}_i := C - x_i + 1$. Using the bijection between $x$'s and $\overline{x}$'s, we can conclude the proof. 
\end{proof}

The more interesting property is the near-equality of the sizes of $\bL_{j,t}$ and $\bR_{2m-j,t}$ (whenever $t \in [L:D]$); this is where we use most of the properties we incorporated in our random walks. 
\begin{lemma}\label{lem:bjt-reflect}
	For all $j \in [2m-1]$ and $t \in [L:D]$, 
	\[
	\card{\bR_{2m-j,t}} - 2^{-m} \cdot \frac{C^P}{D} \leq \card{\bL_{j,t}} \leq \card{\bR_{2m-j,t}} + 2^{-m} \cdot \frac{C^P}{D}. 
	\]
\end{lemma}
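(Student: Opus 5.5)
The plan is to reduce everything to left sets, where both sides are described by the same walk $W(x)$, and then use the delay to remove the dependence on $t$. By \Cref{obs:bjt-direct}, $\card{\bR_{2m-j,t}} = \card{\bL_{2m-j,t}}$. So it suffices to show that for every $j \in [2m-1]$ and every $t \in [L:D]$,
\[
\card{\card{\bL_{j,t}} - \card{\bL_{2m-j,t}}} \leq 2^{-m} \cdot \frac{C^P}{D}.
\]
Throughout I use that $\card{\bL_{j,t}}/C^P$ equals the probability, over $x \in_R [C]^P$, that $\bL(x) \in \bL_{j,t}$. The choice of $C$ in the remark makes three things hold exactly: the start state has law $\mu$, the delay $d$ is uniform on $[D]$, and the inner walk is a standard random walk, with all three independent.

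\textbf{Step 1: conditioning on the delay.} By \Cref{eq:timemaxW}, $\timemax{W(x)} = (d-1) + \tau$, where $\tau := \timemax{W_{in}}$. The law of $W_{in}$ is that of a random walk with start distributed as $\dist_m$, and it does not depend on $d$: the inner walk uses the coordinates $x_{d+2}, x_{d+3}, \ldots$, which are i.i.d.\ uniform for every fixed $d$. On the event that the walk is $L$-restricted we have $\tau \in [1:L]$. The walk only ever reads indices $\leq d + L + 1 \leq P$, so nothing is truncated by the finite label length. Hence
\[
\frac{\card{\bL_{j,t}}}{C^P} = \sum_{\tau=1}^{L} \Pr\paren{d = t-\tau+1} \cdot \Pr_{W \sim \dist_m}\paren{\text{$W$ is $L$-restricted, ends in $0$, } \maxwalk{W}=j,\ \timemax{W}=\tau}.
\]

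\textbf{Step 2: using the range of $t$.} This is the step that uses the hypothesis $t \in [L:D]$, and I expect it to be the only delicate point. For every $\tau \in [1:L]$ we get $t-\tau+1 \in [t-L+1, t] \subseteq [1, D]$. So each term has $\Pr(d = t-\tau+1) = 1/D$ exactly, and the sum collapses to
\[
\frac{\card{\bL_{j,t}}}{C^P} = \frac{1}{D} \cdot q_j, \qquad q_j := \Pr_{W \sim \dist_m}\paren{\text{$W$ is $L$-restricted, ends in $0$, } \maxwalk{W}=j}.
\]
This no longer depends on $t$. I would double-check the edge case where the start state is itself the eventual maximum. There the convention gives $\timemax{W} = d$, i.e.\ $\tau = 1$, which is consistent with the formula above.

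\textbf{Step 3: removing the length restriction.} Set $p_j := \Pr_{W \sim \dist_m}(\text{$W$ ends in $0$ and } \maxwalk{W}=j)$. Then $p_j - \Pr(\text{not $L$-restricted}) \leq q_j \leq p_j$. Since $L = m^4 \geq 2m^3$, \Cref{clm:walk-length} (applied conditionally on any start state) gives $\Pr(\text{not $L$-restricted}) \leq 2^{-m}$. By \Cref{lem:start-state}, $p_j = p_{2m-j}$. So $q_j$ and $q_{2m-j}$ both lie in the interval $[p_j - 2^{-m}, p_j]$, and therefore $\card{q_j - q_{2m-j}} \leq 2^{-m}$. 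Multiplying by $C^P/D$ and combining with \Cref{obs:bjt-direct} gives both inequalities of the lemma.
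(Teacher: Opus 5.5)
Your proposal is correct and follows essentially the paper's argument: condition on the uniform, independent delay to get $\card{\bL_{j,t}}/C^P = q_j/D$ for $t \in [L:D]$, then sandwich with \Cref{clm:walk-length} and apply the symmetry in \Cref{lem:start-state}. The only cosmetic difference is that you move from $\bR_{2m-j,t}$ to $\bL_{2m-j,t}$ via the bijection of \Cref{obs:bjt-direct}, whereas the paper notes directly that the reflected inner walk $\barW_{in}$ also has law $\dist_m$ and runs the same computation on the right side.
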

\begin{proof}
	Define $W_{in}(x)$ as the inner walk of $W(x)$, which is the walk that starts at time $d(x)$ (the delay) and continues the same as $W$; we define $\barW_{in}(x)$ similarly. 
	In the following, when clear from the context, we drop $(x)$ and write $W$ to denote $W(x)$ (similarly for $W_{in},\barW$, etc.). 
	
	We first have 
	\begin{align*}
		\frac{\card{\bL_{j,t}}}{C^P} &= \Pr_{x}\Paren{\text{$W$ $L$-restricted, $W$ reaches $0$, $\maxwalk{W}=j$, $\timemax{W} = t$}} \tag{by~\Cref{def:W(x)}}\\
		&=  \Pr_{x}\Paren{\text{$W_{in}$ $L$-restricted, $W_{in}$ reaches $0$, $\maxwalk{W_{in}}=j$,  $\timemax{W} = t$}} \tag{$L$-restrictedness, reaching $0$, and the max-states are all the same 
		between $W_{in}$ and $W$}  \\
		&= \Pr_{x}\Paren{\text{$W_{in}$ $L$-restricted, $W_{in}$ reaches $0$, $\maxwalk{W_{in}}=j$, $\timemax{W_{in}} = t-(d(x)-1)$}} \tag{by the definition of $\timemax{W}$ in~\Cref{eq:timemaxW}} \\
		&= \frac1D \cdot \Pr_{x}\Paren{\text{$W_{in}$ $L$-restricted, $W_{in}$ reaches $0$, $\maxwalk{W_{in}}=j$}} 
	\end{align*}
	as $d(x)$ is independent of $W_{in}$ and is uniform over $[D]$, 
	and so the event $d(x) = t-\timemax{W_{in}}+1$ (conditioned on $W_{in}$) happens with probability $1/D$ as the RHS is in $[D]$ as well; this is because $\timemax{W_{in}} \leq L$ (by $L$-restrictedness) and $t \in [L:D]$. 
	
	We can do the same exact analysis for $\bR_{2m-j,t}$ and $\barW$ as well to have 
	\[
		\frac{\card{\bR_{2m-j,t}}}{C^P} = \frac{1}{D} \cdot \Pr_{x}\Paren{\text{$\barW_{in}$ $L$-restricted, $\barW_{in}$ reaches $0$, $\maxwalk{\barW_{in}}=2m-j$}}. 
	\]
	Consider the expression in the RHS of either of these two terms. The distribution of $W_{in}$ or $\barW_{in}$ is exactly the same as the distribution of a random walk $\widetilde{W} \sim \dist_m$ defined in~\Cref{lem:start-state}. Hence,  
	\begin{align*}
		\frac{\card{\bL_{j,t}}}{C^P} &=  \frac{1}{D} \cdot \Pr_{\widetilde{W}}\Paren{\text{$\widetilde{W}$ $L$-restricted, $\widetilde{W}$ reaches $0$, $\maxwalk{\widetilde{W}}=j$}} \\
		&\geq \frac{1}{D} \cdot \paren{\Pr_{\widetilde{W}}\Paren{\text{$\widetilde{W}$ reaches $0$, $\maxwalk{\widetilde{W}}=j$}} - \Pr_{\widetilde{W}}\Paren{\text{$\widetilde{W}$ not $L$-restricted}}} \tag{as $\Pr(A \wedge B) \geq \Pr(B) - \Pr(\overline{A})$} \\
		&\geq \frac{1}{D} \cdot \paren{\Pr_{\widetilde{W}}\Paren{\text{$\widetilde{W}$ reaches $0$, $\maxwalk{\widetilde{W}}=j$}} - 2^{-m}} \tag{as $L=m^4$ and by~\Cref{clm:walk-length}} \\ 
		&= \frac{1}{D} \cdot \paren{\Pr_{\widetilde{W}}\Paren{\text{$\widetilde{W}$ reaches $0$, $\maxwalk{\widetilde{W}}=2m-j$}} - 2^{-m}} \tag{by the symmetry property of~\Cref{lem:start-state}} \\
		&\geq \frac{1}{D} \cdot \paren{\Pr_{\widetilde{W}}\Paren{\text{$\widetilde{W}$ $L$-restricted, $\widetilde{W}$ reaches $0$, $\maxwalk{\widetilde{W}}=2m-j$}} - 2^{-m}} \tag{as $\Pr(A) \geq \Pr(A \wedge B)$} \\
		&= \frac{\card{\bR_{2m-j,t}}}{C^P} - \frac{2^{-m}}{D}. 
	\end{align*}
	Applying the same argument starting from $\bR_{2m-j,t}$ then implies that 
	\[
		\card{\bR_{2m-j,t}} - 2^{-m} \cdot \frac{C^P}{D} \leq \card{\bL_{j,t}} \leq \card{\bR_{2m-j,t}} + 2^{-m} \cdot \frac{C^P}{D} \qed
	\]
	
\end{proof}

\subsection{The Construction} 

We are now ready to define the edges of the blueprint $\bG=(\bL,\bR,\bE)$ and prove its properness. As stated earlier, both vertex-sets $\bL$ and $\bR$ are labeled by $[C]^P$. The edges in $\bE$ are defined as follows: 
\begin{itemize}
	\item For any $j \in [2m-1]$ and $t \in [L:D]$, we add a matching $\bM_{j,t}$ of size $s_{j,t} := \min\paren{\card{\bL_{j,t}},\card{\bR_{2m-j,t}}}$ between two arbitrarily-chosen $s_{j,t}$-subsets of $\bL_{j,t}$ and $\bR_{2m-j,t}$. Moreover, 
	the matching $\bM_{j,t}$ is added as part of $\bEE{p}$ for $p=t+2$.
\end{itemize}

We need to check that the blueprint $\bG$ is proper, which is handled by the following two claims. 

\begin{claim}[``Matching constraint'']\label{clm:bG-matching}
	The edge-set $\bE$ is a matching. 
\end{claim}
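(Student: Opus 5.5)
The plan is to show that every vertex of $\bG$ is incident on at most one edge of $\bE$. Once this holds, $\bE$ is a matching in $\bG$ (and, since each edge is placed in exactly one group $\bEE{t+2}$, the groups remain disjoint). By construction, $\bE = \bigsqcup_{j,t} \bM_{j,t}$ over $j \in [2m-1]$ and $t \in [L:D]$. Each $\bM_{j,t}$ is by definition a matching between a subset of $\bL_{j,t}$ and a subset of $\bR_{2m-j,t}$. So within a single $\bM_{j,t}$ no vertex has two edges, and it remains to rule out a vertex being touched by two different matchings $\bM_{j,t}$ and $\bM_{j',t'}$ with $(j,t) \neq (j',t')$.

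The key step is pairwise disjointness of the endpoint sets. I will show that the sets $\set{\bL_{j,t}}_{j,t}$ are pairwise disjoint subsets of $\bL$, and likewise the sets $\set{\bR_{j,t}}_{j,t}$ are pairwise disjoint subsets of $\bR$. For a fixed label $x \in [C]^P$, the walk $W(x)$ is a deterministic function of $x$ by \Cref{def:W(x)}. Hence $\maxwalk{W(x)}$ and $\timemax{W(x)}$ (as defined in \Cref{eq:timemaxW}) are single well-defined values. Therefore $\bL(x)$ lies in $\bL_{j,t}$ for at most one pair $(j,t)$, namely $j=\maxwalk{W(x)}$ and $t=\timemax{W(x)}$, and only if the remaining conditions (being $L$-restricted and reaching $0$) hold. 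The same argument applied to $\barW(x)$ handles $\bR$.

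To finish, suppose a vertex $\bL(x)$ is incident on edges of both $\bM_{j,t}$ and $\bM_{j',t'}$. Then $\bL(x) \in \bL_{j,t} \cap \bL_{j',t'}$, which forces $(j,t)=(j',t')$. For the right side, a vertex $\bR(y)$ matched in $\bM_{j,t}$ lies in $\bR_{2m-j,t}$. Since the map $j \mapsto 2m-j$ is injective on $[2m-1]$, the indices $(2m-j,t)$ determine $(j,t)$, and disjointness of the $\bR$-sets again gives $(j,t)=(j',t')$.

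I do not expect a real obstacle. The only subtlety is the convention for $\timemaxW$ when the start state is the eventual maximum, and this is fixed uniquely by \Cref{eq:timemaxW}, so $\timemax{W(x)}$ is still a single value.
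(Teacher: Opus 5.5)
Your proposal is correct and follows essentially the same approach as the paper: the sets $\bL_{j,t}$ (and likewise $\bR_{j,t}$) are pairwise disjoint because the maximum state and the first time it is reached are uniquely determined by the walk. Hence each vertex lies in the endpoint set of at most one $\bM_{j,t}$, and since each $\bM_{j,t}$ is itself a matching, $\bE$ is a matching. Your extra remarks, the injectivity of $j \mapsto 2m-j$ and the $\timemaxW$ convention, just make explicit details that the paper's short proof leaves implicit.
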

\begin{proof}
	By definition, the sets $\bL_{j,t}$ are disjoint for different values of $j$ (the maximum of a random walk is a unique state) as well as $t$ (the \emph{first} time reaching the eventual maximum is unique). Same is true for $\bR_{2m-j,t}$ as well, 
	thus $\bE = \cup_{p \in [P]} \bEE{p}$ itself is a matching also.  
\end{proof}

% In the preamble:
% \usepackage{tikz}
% \usetikzlibrary{arrows.meta}

\begin{figure}[t!]
\centering
\begin{tikzpicture}[x=1.15cm,y=1cm,>=Latex]

    %------------------------------------------------------------
    % Parameters
    %------------------------------------------------------------
    \def\axisy{0}
    \def\topy{3.15}
    \def\texty{3.55}
    \def\circley{0.72}
    \def\dy{0.23}

    % Arrow levels, with fixed vertical spacing
    \pgfmathsetmacro{\ya}{2.30}
    \pgfmathsetmacro{\yb}{\ya-\dy}
    \pgfmathsetmacro{\yc}{\yb-\dy}
    \pgfmathsetmacro{\yd}{\yc-\dy}
    \pgfmathsetmacro{\ye}{\yd-\dy}
    \pgfmathsetmacro{\yf}{\ye-\dy}

    %------------------------------------------------------------
    % Styles
    %------------------------------------------------------------
    \tikzset{
        gridline/.style={
            dashed,
            draw=black!55,
            line width=0.5pt
        },
        baseline/.style={
            draw=black,
            line width=2.2pt
        },
        redarrow/.style={
            -{Latex[length=2.8mm,width=2.2mm]},
            draw=red!75!black,
            line width=1.3pt,
            shorten <=0pt,
            shorten >=0pt
        },
        greenarrow/.style={
            -{Latex[length=2.8mm,width=2.2mm]},
            draw=green!60!black,
            line width=1.3pt,
            shorten <=0pt,
            shorten >=0pt
        },
        ball/.style={
            circle,
            draw=black,
            fill=yellow!75,
            minimum size=5mm,
            inner sep=0pt
        }
    }

    %------------------------------------------------------------
    % Dashed vertical lines at every integer
    %------------------------------------------------------------
    \foreach \i in {0,...,12}{
        \draw[gridline]
            (\i,\axisy)
            --
            (\i,\topy);
    }

    %------------------------------------------------------------
    % Horizontal baseline
    %------------------------------------------------------------
    \draw[baseline]
        (0,\axisy)
        --
        (12,\axisy);

    %------------------------------------------------------------
    % Integer labels
    %------------------------------------------------------------
    \foreach \i in {0,...,12}{
        \node[font=\Large]
            at (\i,-0.55)
            {\i};
    }

    %------------------------------------------------------------
    % Circles and upper labels
    %------------------------------------------------------------
    \node[ball] at (2,\circley) {};
    \node[ball] at (10,\circley) {};

    \node[font=\Large]
        at (2,\texty)
        {$W(x_{<p}\circ z)$};

    \node[font=\Large]
        at (10,\texty)
        {$\barW(y_{<p}\circ z)$};

    %------------------------------------------------------------
    % Left side
    % Every arrow has length exactly one grid interval.
    %------------------------------------------------------------

    % Red arrows point left
    \draw[redarrow] (1,\ya) -- (0,\ya);
    \draw[redarrow] (2,\yb) -- (1,\yb);
    \draw[redarrow] (3,\yc) -- (2,\yc);
    \draw[redarrow] (4,\yd) -- (3,\yd);

    % Green arrows point right
    \draw[greenarrow] (2,\ye) -- (3,\ye);
    \draw[greenarrow] (3,\yf) -- (4,\yf);

    %------------------------------------------------------------
    % Right side
    % Reflected through x=6, with colors exchanged.
    %------------------------------------------------------------

    % Green arrows point right
    \draw[greenarrow] (8,\yd) -- (9,\yd);
    \draw[greenarrow] (9,\yc) -- (10,\yc);
    \draw[greenarrow] (10,\yb) -- (11,\yb);
    \draw[greenarrow] (11,\ya) -- (12,\ya);

    % Red arrows point left
    \draw[redarrow] (10,\ye) -- (9,\ye);
    \draw[redarrow] (9,\yf) -- (8,\yf);

\end{tikzpicture}

\caption{An illustration of the proof of the ban constraint (\Cref{clm:bG-ban}) for $\bG$ with $m=6$, $j=2$, and $z \approx (\uparrow,\uparrow,\downarrow,\downarrow,\downarrow,\downarrow)$, which by that we mean the $\uparrow$ indices are in $[(C/2)+1:C]$
and $\downarrow$ ones are in $[1:C/2]$. Suppose the choices of $x_{<p}$ and $y_{<p}$ already ``bring'' the walks $W(x_{<p} \circ z)$ and $\barW(y_{<p} \circ z)$ to 
states $2$ and $10$, respectively. From there, $W(x_{<p} \circ z)$, increases the state by two and then decreases it by four, whereas by reflection, $\barW(y_{<p} \circ z)$ does the exact opposite; thus, 
$W(x_{<p} \circ z)$ reaches state $0$ iff $\barW(y_{<p} \circ z)$ reaches $12 (=2m)$. As such, at most one pair of these vertices can be matched in $\bG$ (only vertices whose walk reaches $0$ are matched regardless of being in $\bL$ or $\bR$).}
\label{fig:W}
\end{figure}

\begin{claim}[``Ban constraint'']\label{clm:bG-ban}
	The edge-set $\bE$ respects the ban constraint. 
\end{claim}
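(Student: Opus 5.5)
Fix an arbitrary edge $\be = (\bL(x),\bR(y))$ of $\bE$. By the construction it lies in some $\bM_{j,t}$ with $j \in [2m-1]$ and $t \in [L:D]$, and hence in $\bEE{p}$ for $p = t+2$. Thus $\bL(x) \in \bL_{j,t}$ and $\bR(y) \in \bR_{2m-j,t}$. Fix any $z \in [C]^{P-p+1}$ and set $x' := x_{<p}\conc z$ and $y' := y_{<p}\conc z$. We must show that $\bL(x')$ and $\bR(y')$ are not both matched. Every matched vertex of $\bL$ lies in some $\bL_{j',t'}$, so its walk $W$ reaches $0$; likewise every matched vertex of $\bR$ has $\barW$ reaching $0$. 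It therefore suffices to show that $W(x')$ and $\barW(y')$ cannot both reach $0$.

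\textbf{Step 1: the prefix fixes both walks up to time $t$.} The coordinates $x'_{\pstart}$ and $x'_{\pdelay}$ coincide with those of $x$, because $p = t+2 \geq 3$. So $W(x')$ has the same start state and the same delay $d = d(x)$ as $W(x)$. By Line~\eqref{line:reflection} of \Cref{def:W(x)}, for $s \geq d$ the transition from $w_s$ to $w_{s+1}$ is determined by $x'_{s+2}$. Hence the first $t$ states $w_1,\ldots,w_t$ depend only on coordinates with index at most $t+1 = p-1$, and $W(x')$ agrees with $W(x)$ through time $t$. Because $\timemax{W(x)} = t$ and $\maxwalk{W(x)} = j$, we get $w_t = j$, and $W(x)$ has not terminated by time $t$ (since $j \in [2m-1]$). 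I need to check the edge case $t = d$ separately: there the start state itself is the maximum and the convention in \Cref{eq:timemaxW} gives $\timemax{W} = d$. The state at time $t$ is still $j$, and all later transitions still use indices at least $t+2 = p$. The same argument applied to $y$ shows that $\barW(y')$ equals $2m-j$ at time $t$, with its remaining moves driven by $y'_{s+2} = z_{s+2-p+1}$ for $s \geq t$.

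\textbf{Step 2: reflection.} For $s \geq t$, both walks read the same coordinate $z_{s-t+1}$ at step $s$. $W(x')$ moves down exactly when $\barW(y')$ moves up. By induction on $s$, as long as neither walk has terminated, their states satisfy $w_s + \bar w_s = 2m$, starting from $j + (2m-j) = 2m$. The two walks therefore stay mirror images and terminate at the same time. When $W(x')$ hits $0$, $\barW(y')$ simultaneously hits $2m$, and vice versa. So $W(x')$ reaching $0$ forces $\barW(y')$ to end at $2m$ (or not to terminate within the horizon $D+L$, but in either case not to reach $0$). Hence at most one of $\bL(x'),\bR(y')$ lies in any $\bL_{\cdot,\cdot}$ or $\bR_{\cdot,\cdot}$, and the two are not both matched.

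\textbf{Main obstacle.} The delicate part is the index bookkeeping in Step 1. I have to make sure that the coordinates affecting the walk up to time $t$ are exactly those with index less than $p = t+2$, including in the boundary situations: when the delay equals $t$, and when the walk has a non-trivial delay so that coordinates $x_{3},\ldots,x_{d+1}$ are simply ignored. I also need to check that the shift between $W$'s indices and $z$'s indices is identical for $x'$ and $y'$. This holds because both use $p = t+2$ and both walks are at time $t$ at that point.
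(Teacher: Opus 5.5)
Your proposal is correct and follows essentially the same argument as the paper: the prefix $x_{<p}$ (resp.\ $y_{<p}$) forces $W(x_{<p}\conc z)$ (resp.\ $\barW(y_{<p}\conc z)$) to be at state $j$ (resp.\ $2m-j$) at time $t$. From then on, both walks are driven by the same coordinates of $z$ in opposite directions, so one reaching $0$ forces the other to reach $2m$. Your index bookkeeping is more explicit than the paper's; the one fact you use without stating it is $d(x),d(y)\le t$, which follows from $\timemax{W}=(d-1)+\timemax{W_{in}}\ge d$.
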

\begin{proof}

	Consider an edge $\be \in \bEE{p}$ for  $p=t+2$ for some arbitrary $t \in [L:D]$ and suppose $\be$ is between some $\bL(x) \in \bL_{j,t}$ and $\bR(y) \in \bR_{2m-j,t}$ for $x,y \in [C]^P$. Recall all edges of $\bG$ are of this form. 
	
	By~\Cref{def:proper-blueprint}, the ban constraint requires that for all $z \in [C]^{P-p+1}$, at most one of $\bL(x_{<p} \circ z)$ and $\bR(y_{<p} \circ z)$ is matched in $\bE$. We show that this is the case for every fixed $z \in [C]^{P-p+1}$. 
	\Cref{fig:W} can be helpful with providing an illustration of this step of the proof.

	Consider the walk $W(x_{<p} \circ z)$. Fixing $x_{<p}$ fixes the starting state of the walk (by fixing $x_{\pstart}$), the delay of the walk (by fixing $x_{\pdelay}$) and the states until time $t$ of the walk. 
	By the definition of $\bL(x) \in \bL_{j,t}$, we know that at the time $t$, the walk $W(x_{<p} \circ z)$ will also be at the state $j$. 
	Thus, for $\bL(x_{<p} \circ z)$ to be matched, we need the walk $W(x_{<p} \circ z)$ to reach the state $0$ or equivalently, we need the walk $W_z := w_t(=j),w_{t+1},\cdots,$ to reach $0$ starting from $j$. 
	 Note that the entire movement of the walk $W_z$ is determined by $z$.  
	
	On the other hand, consider the reflection walk $\barW(y_{<p} \circ z)$. Again, fixing $y_{<p}$ and the fact $\bR(y) \in \bR_{2m-j,t}$, in particular means that this walk is at state $2m-j$ at time $t$. 
	Consider the continuation of this walk as $\barW_z := \bar{w}_t (=2m-j), \bar{w}_{t+1},\cdots,$ starting from $2m-j$ which is fully determined by $z$. 
	Given $\barW_z$ is a reflection of $W_z$, if $W_z$ starting from $j$ reaches $0$, then, $\barW_z$ starting from $2m-j$ will reach $2m$. 
	But, that means that the walk $\barW(y_{<p} \circ z)$ ends in the state $2m$ which forces $\bR(y_{<p} \circ z)$ to never be matched in the blueprint. 
	
	Combining the above, we have that between the two vertices $\bL(x_{<p} \circ z)$ and $\bR(y_{<p} \circ z)$, at most one is matched in the blueprint, respecting the ban constraint. 
\end{proof}

We thus established that the blueprint $\bG$ is indeed a proper blueprint. 

\subsection{Value Calculation} 

The last step is to calculate the value of the blueprint which is $\card{\bE}/C^P$. 

\begin{lemma}\label{lem:value}
	The value of blueprint $\bG=\bG_m$ is 
	\[
		value(\bG) \geq \paren{1-\frac{1}{m}} \cdot \frac{2m}{3m+1}-3m \cdot 2^{-m}, 
	\]
\end{lemma}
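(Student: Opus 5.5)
Since the matchings $\bM_{j,t}$ are disjoint (\Cref{clm:bG-matching}), the count is
\[
	\card{\bE} = \sum_{j \in [2m-1]} \sum_{t \in [L:D]} s_{j,t} = \sum_{j,t} \min\paren{\card{\bL_{j,t}},\card{\bR_{2m-j,t}}}.
\]
By \Cref{lem:bjt-reflect}, each minimum is at least $\card{\bL_{j,t}} - 2^{-m} C^P/D$. Summing over the at most $(2m-1)(D-L+1) \leq 2m D$ pairs $(j,t)$, the total loss is at most $2m \cdot 2^{-m} C^P$. Hence
\[
	value(\bG) \geq \frac{1}{C^P}\sum_{j \in [2m-1]}\sum_{t \in [L:D]} \card{\bL_{j,t}} - 2m\cdot 2^{-m}.
\]

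Next I compute $\sum_j \card{\bL_{j,t}}/C^P$ for each fixed $t \in [L:D]$, using the first part of the proof of \Cref{lem:bjt-reflect}. That argument gives
\[
	\frac{\card{\bL_{j,t}}}{C^P} = \frac{1}{D}\Pr_{\widetilde W \sim \dist_m}\paren{\widetilde W\ L\text{-restricted},\ \text{reaches } 0,\ \maxwalk{\widetilde W}=j},
\]
and it is valid because $t \in [L:D]$ forces the required delay into $[D]$. Summing over $j$ removes the condition on the maximum, giving $\frac1D \Pr(\widetilde W\ L\text{-restricted and reaches }0)$. This is at least $\frac1D\paren{\frac{2m}{3m+1} - 2^{-m}}$, by \Cref{lem:start-state} together with \Cref{clm:walk-length}; the latter applies because $L = m^4 \geq 2m^3$.

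Summing over $t \in [L:D]$ contributes a factor $(D-L+1)/D \geq 1 - L/D = 1 - \frac{1}{2m} \geq 1-\frac1m$, since $D = 2mL$. Thus
\[
	value(\bG) \geq \paren{1-\frac1m}\cdot\frac{2m}{3m+1} - 2^{-m} - 2m\cdot 2^{-m} \geq \paren{1-\frac1m}\cdot\frac{2m}{3m+1} - 3m\cdot 2^{-m},
\]
which is the claimed bound.

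There is no real obstacle here; the only care needed is in the bookkeeping. The time window $[L:D]$ is what makes the $1/D$ uniformity hold and costs only the factor $1-L/D$. The error terms (at most $2m$ values of $j$, the $D$ from the number of times cancelling the $1/D$ in \Cref{lem:bjt-reflect}, and one extra $2^{-m}$ from $L$-restrictedness) stay within $3m \cdot 2^{-m}$.
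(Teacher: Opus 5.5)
Your proof is correct and follows the paper's argument: you lower bound each $\min$ by $\card{\bL_{j,t}}$ minus the error from \Cref{lem:bjt-reflect}, then bound $\sum_{j,t}\card{\bL_{j,t}}/C^P$ using \Cref{clm:walk-length} and \Cref{lem:start-state}. The one small variation is the sum over $t$: you reuse the exact per-$t$ identity from the proof of \Cref{lem:bjt-reflect} to get the factor $(D-L+1)/D \geq 1-\frac{1}{2m}$, whereas the paper restricts the delay to $[L:D-L]$ to force the first hit of the maximum into $[L:D]$, which gives the slightly weaker factor $(D-2L)/D = 1-\frac{1}{m}$.
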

\begin{proof}
	Given $\bE$ is a matching, and by construction, 
	\begin{align*}
		value(\bG) &= \frac{\card{\bE}}{C^P} \tag{by~\Cref{def:blueprint-value-approx}} \\
		&= \sum_{j=1}^{2m-1} \sum_{t = L}^{D} \frac{\min\paren{\card{\bL_{j,t}},\card{\bR_{2m-j,t}}}}{C^P} \tag{by the definition of edges} \\
		&\geq \sum_{j=1}^{2m-1} \sum_{t = L}^{D} \paren{\frac{\card{\bL_{j,t}}}{C^P} - 2^{-m} \cdot \frac{1}{D}} \tag{by~\Cref{lem:bjt-reflect}} \\
		&\geq \paren{\sum_{j=1}^{2m-1} \sum_{t = L}^{D} \frac{\card{\bL_{j,t}}}{C^P}} - 2m \cdot 2^{-m}. 
	\end{align*}
	We can now bound the inner sum separately as follows:
	\begin{align*}
		\sum_{j=1}^{2m-1} \sum_{t = L}^{D} \frac{\card{\bL_{j,t}}}{C^P} &= \sum_{j=1}^{2m-1} \sum_{t = L}^{D} \Pr\Paren{\text{$W$ $L$-restricted, $W$ reaches $0$, $\maxwalk{W}=j$, $\timemax{W} = t$}} \tag{by the definition of $\bL_{j,t}$} \\
		&= \sum_{j=1}^{2m-1} \Pr\Paren{\text{$W$ $L$-restricted, $W$ reaches $0$, $\maxwalk{W}=j$, $\timemax{W} \in [L:D]$}} \tag{as the events in the sum are all disjoint} \\
		&\geq \sum_{j=1}^{2m-1} \Pr\Paren{\text{$W$ $L$-restricted, $W$ reaches $0$, $\maxwalk{W}=j$, $d(x) \in [L:D-L]$}} \tag{since $W$ is $L$-restricted, having delay in $[L:D-L]$ ensures $\timemax{W} \in [L:D]$} \\
		&\geq \frac{D-2L}{D} \cdot \sum_{j=1}^{2m-1} \Pr\Paren{\text{$W$ $L$-restricted, $W$ reaches $0$, $\maxwalk{W}=j$}} \tag{as $d(x)$ is uniform independently on $[D]$ and $\card{[L:D-L]} = D-2L+1 > D-2L$} \\
		&= \frac{D-2L}{D} \cdot \Pr\Paren{\text{$W$ $L$-restricted, $W$ reaches $0$}} \tag{the sum is over all disjoint possibilities of reaching $0$} \\
		&\geq \frac{D-2L}{D} \cdot  \paren{\Pr\Paren{\text{$\widetilde W$ reaches $0$}}-2^{-m}} \tag{by~\Cref{clm:walk-length} and using $\widetilde W$ as a generic random walk} \\
		&= \frac{D-2L}{D} \cdot \paren{\frac{2m}{3m+1}-2^{-m}} \tag{by~\Cref{lem:start-state}} \\
		&\geq \paren{1-\frac{1}{m}} \cdot \frac{2m}{3m+1}-2^{-m} \tag{as $D = 2m \cdot L$}. 
	\end{align*}
	By plugging in this into the earlier bound, we have, 
	\[
		value(\bG) \geq \paren{1-\frac{1}{m}} \cdot \frac{2m}{3m+1}- 3m \cdot 2^{-m}, 
	\]
	concluding the proof. 
\end{proof}

The lower bound in \Cref{thm:blueprint} now follows immediately from~\Cref{lem:value} and the properness of the blueprint  established in~\Cref{clm:bG-matching,clm:bG-ban}. The upper bound is simply because 
\[
	value(\bG_m) \leq \Pr_{W \sim \dist_m}\paren{\text{$W$ reaches $0$}} = \frac{2m}{3m+1}, 
\]
by~\Cref{lem:start-state}, as every matched vertex in $\bL$ belongs to some $\bL_{j,t}$ and hence its walk has reached $0$. 

This concludes the proof of~\Cref{thm:blueprint}.

\section*{Acknowledgement} 

We are thankful to Aaron Bernstein, Gunadi Gani, Lap Chi Lau, Euiwoong Lee, and Thatchaphol Saranurak for helpful discussions on blueprints, and to Lap Chi Lau and Janani Sundaresan for 
many helpful comments on the presentation of an earlier version of the paper. We are also grateful to Trever Brown for providing us with computational resources for generating
blueprints efficiently in the early stages of this project and to David Woodruff for running multiple LLM prompts on our behalf on Google's Gemini-based models. 

  Sepehr Assadi is additionally grateful to Soheil Behnezhad, Aaron Bernstein, Niv Buchbinder, Deeparnab Chakrabarty, Michael Kapralov, Sanjeev Khanna, Christian Konrad, Lap Chi Lau, Thatchaphol Saranurak, and David Wajc for various discussions
  on semi-streaming matching over the years. 

\section*{AI Acknowledgement} 
We have used Google's Gemini-based models, GPT 5.5 Pro, GPT 5.6 Sol, Claude Opus, and Claude Fable as part of our research. 
Gemini-based models were used briefly during the preliminary stages for a small number of prompts using only the definitions of blueprints, without additional project context, 
and their outputs have not been used in the rest of the project. 
The remaining systems were routinely supplied with the authors' evolving research notes and ideas, and 
have been used for brainstorming and technical assistance. Specifically, the authors provided an optimal solution to a \emph{relaxation} of blueprints (a generalization of the ``density''-argument in~\Cref{app:2-3-upper}),
and a combination of Claude Fable and GPT 5.6 Sol gave the idea for using random walks to take this solution from the relaxations 
to blueprints. These random walks formed the crux of our blueprints. 
GPT 5.6 Sol also assisted with the proof of~\Cref{lem:start-state} by solving the optimization problem after authors' directive to formulate it as a linear program. The proof in the paper 
is different than the AI-generated proof and is our own as we could simply use the provided variable values and directly prove they satisfy the constraints and the bound on the objective. 

AI tools were also used to assist in proofreading and creating figures from given drawings. No AI-generated text appears in the paper: the authors formulated and wrote all the final mathematical content themselves, including all the statements and proofs.
The authors take full responsibility for all content.

\bibliographystyle{halpha-abbrv}
\bibliography{new}

\clearpage
\appendix

\part*{Appendix}

% !TeX root = main.tex 
%!TEX root = main.tex

\section{\emph{Upper Bounding} the Value of Blueprints}\label{app:2-3-upper}

For completeness and to provide further intuition about blueprints, we prove an optimal upper bound for~\Cref{eq:sup}. As stated earlier, this bound also follows from~\Cref{thm:framework} of~\cite{AssadiJX26} and the greedy algorithm for the semi-streaming
matching problem. 

\begin{lemma}\label{thm:value-2/3}
	$\sup\set{value(\bG):\text{$\bG$ a proper blueprint}} \leq \dfrac23$. 
\end{lemma}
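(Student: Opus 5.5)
The plan is a \emph{martingale} (``density'') argument, which is the continuous analogue of the gambler's-ruin reasoning in the construction. Fix a proper blueprint $\bG$ with parameters $(P,C)$ and value $v$. For any prefix $u \in [C]^{k}$ with $0 \leq k \leq P$, define the densities
\[
a_L(u) := \Pr_{z \in_R [C]^{P-k}}\paren{\bL(u \conc z)~\text{is matched}}, \qquad a_R(u) := \Pr_{z \in_R [C]^{P-k}}\paren{\bR(u \conc z)~\text{is matched}}.
\]
Since $\bE$ is a matching, $a_L(\emptyset) = a_R(\emptyset) = v$. For a uniformly random $x \in [C]^P$, the sequence $M_k := a_L(x_{\leq k})$ is a Doob martingale with values in $[0,1]$. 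It starts at $v$ and ends at $M_P = \mathbb{1}[\bL(x)~\text{matched}]$; the same holds on the $\bR$ side.

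\textbf{Step 1 (ban constraint as a density inequality).} Take an edge $\be = (\bL(x),\bR(y)) \in \bEE{p}$. The ban constraint says that the sets $\{z : \bL(x_{<p}\conc z)~\text{matched}\}$ and $\{z : \bR(y_{<p}\conc z)~\text{matched}\}$ are disjoint. Hence
\[
a_L(x_{<p}) + a_R(y_{<p}) \leq 1 .
\]
For a matched $\bL(x)$, let $m_L(x) := \min_{0 \leq k \leq P-1} a_L(x_{\leq k})$, and define $m_R(y)$ analogously. Then every edge $(\bL(x),\bR(y))$ satisfies $m_L(x) + m_R(y) \leq 1$. In particular, for any threshold $a$, an edge whose left endpoint has $m_L(x) > a$ must have a right endpoint with $m_R(y) < 1-a$. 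Since the edges form a perfect matching between matched $\bL$-vertices and matched $\bR$-vertices, this gives
\[
\Pr_x\paren{\bL(x)~\text{matched},~m_L(x) > a} \leq \Pr_y\paren{\bR(y)~\text{matched},~m_R(y) < 1-a}.
\]

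\textbf{Step 2 (gambler's-ruin bounds on the martingale).} For $b < v$, I would use optional stopping at the first time $\tau$ the martingale drops to at most $b$ (respectively below $b$). At that time its value is at most $b$, so
\[
\Pr\paren{\text{matched} \wedge \min < b} \leq b \cdot \Pr(\tau < \infty).
\]
Applying optional stopping to $1-M$ gives the ruin-type bound
\[
\Pr(\tau<\infty) \leq \frac{1-v}{1-b}.
\]
Consequently,
\[
\Pr\paren{\text{matched} \wedge \min \leq a} \leq \frac{a(1-v)}{1-a},
\qquad\text{so}\qquad
\Pr\paren{\text{matched} \wedge \min > a} \geq v - \frac{a(1-v)}{1-a}.
\]

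\textbf{Step 3 (choose $a = 1/2$).} Suppose for contradiction that $v > 2/3$, so $1/2 < v$ and the bounds of Step 2 apply on both sides. The left side of the Step 1 inequality is at least $v - (1-v) = 2v-1$. The right side, with $b = 1-a = 1/2$, is at most $\tfrac12 \cdot \tfrac{1-v}{1/2} = 1-v$. Therefore $2v - 1 \leq 1 - v$, i.e.\ $v \leq 2/3$, a contradiction; hence $v \leq 2/3$ for every proper blueprint. As a sanity check, this matches the construction, where the matched pairs are exactly balanced around density $1/2$.

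\textbf{Main obstacle.} The delicate part is Step 2. The matching group $p$ of an edge is \emph{not} a stopping time, which is why the argument passes to the running minimum $m_L$. I also need to handle discreteness correctly: the martingale can jump past the threshold, so at time $\tau$ its value is $\leq b$ rather than exactly $b$, and this must go in the right direction for both the ``$\leq b$'' factor and the ruin probability. Finally, the strict-versus-nonstrict inequalities at $a = 1/2$ must be chosen consistently between the $\bL$ and $\bR$ sides. I would apply optional stopping at $\tau$ directly with these inequalities rather than through exact hitting probabilities.
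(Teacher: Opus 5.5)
Your proposal is correct and is essentially the paper's proof written in martingale language: your running minimum $m_L$ is the paper's density $D(\bL(x))$, your Step~1 is its first claim, and your optional-stopping bounds at threshold $1/2$ are exactly its ``$p$-bad'' decomposition, which gives $\Pr(\text{bad}\wedge\text{matched})\leq\frac12\Pr(\text{bad})$ and $\Pr(\text{bad})\leq 2(1-\nu)$. Your final inequality $2v-1\leq 1-v$ is the paper's $1\leq 2(1-\nu)/\nu$ multiplied through by $\nu$.
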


\begin{proof}
Fix any blueprint $\bG=(\bL,\bR,\bE)$ with parameters $(P,C)$. Define $\bX_L$ (resp. $\bX_R$) to be the matched vertices in $\bL$ (resp. $\bR$). 
We use $\nu := value(\bG)$ for the simplicity of notation. We will prove that $\nu \leq 2/3$ (throughout we use the fact that $\nu > 1/2$ to simplify some definitions). 

Consider $\bL$ first. For any vertex $\bL(x)$ for $x \in [C]^P$, define 
\[
	D(\bL(x)) := \min_{p \in [P+1]} \quad \Pr_{z \in_R [C]^P}\paren{\bL(z) \in \bX_L \mid z_{<p}=x_{<p}}. 
\]
For each $p \in [P+1]$, the corresponding min-term in the RHS measures the \emph{density} of $\bX_L$ inside vertices consistent with $x_{<p}$. Specifically, for $p=1$, that term is simply $\Pr(\bL(z) \in \bX_L) = \nu$ and for 
$p=P+1$ it is either $1$ or $0$ depending on whether $\bL(x) \in \bX_L$ or not. 

We define this analogously for $\bR$ by replacing $\bX_L$ with $\bX_R$ and considering $\bR(x)$ for $x \in [C]^P$.

\begin{claim}\label{clm:1}
	For any edge $(\bL(x),\bR(y)) \in \bE$, $D(\bL(x)) + D(\bR(y)) \leq 1$. 
\end{claim}
\begin{proof}
	Fix an edge $(\bL(x),\bR(y)) \in \bEE{p}$ for some $p \in [P]$. By the ban constraint (\Cref{def:proper-blueprint}), for each $z \in [C]^{P-p+1}$, at most one of vertices
	$\bL(x_{<p} \circ z)$ and $\bR(y_{<p} \circ z)$ can be matched in $\bE$. Thus, 
	\[
		\Pr_{w \in_R [C]^P}\paren{\bL(w) \in \bX_L \mid w_{<p}=x_{<p}} + \Pr_{w \in_R [C]^P}\paren{\bR(w) \in \bX_R \mid w_{<p}=y_{<p}} \leq 1. 
	\]
	Taking the minimum on each term can only decrease the LHS, implying the claim. \Qed{clm:1} 
	
\end{proof}

By~\Cref{clm:1}, we have
\[
	\Pr_{(\bL(x),\bR(y)) \in_R \bE}\paren{D(\bL(x)) > \frac12 ~\wedge~ D(\bR(y)) > \frac12} = 0, 
\]
which, by union bound and since $\bE$ is a matching, implies that 
\begin{align}
	1 &= \Pr_{(\bL(x),\bR(y)) \in_R \bE}\paren{D(\bL(x)) \leq \frac12 \vee D(\bR(y)) \leq \frac12} \notag \\
	&\leq \Pr_{\bL(x) \in_R \bX_L}\paren{D(\bL(x)) \leq \frac12} + \Pr_{\bR(y) \in_R \bX_R}\paren{D(\bR(y)) \leq \frac12}.\label{eq:one-side} 
\end{align}

\begin{claim}\label{clm:2}
	$
		\Pr_{\bL(x) \in_R \bX_L}\paren{D(\bL(x)) \leq \dfrac12} \leq \dfrac{1-\nu}{\nu}. 
	$
\end{claim}
\begin{proof}
	We say $x \in [C]^P$ is \textbf{bad} if $D(\bL(x)) \leq 1/2$. We further say $x$ is $\bm{p}$-\textbf{bad} for $p \in [2:P+1]$ iff $p$ is the \emph{smallest} index where
	\[
		\Pr_{z \in_R [C]^P}\paren{\bL(z) \in \bX_L \mid z_{<p}=x_{<p}} \leq \frac 12. 
	\]
	Any bad vertex is $p$-bad for a unique choice of $p \in [2:P+1]$. We have 
	\begin{align}
		\Pr_{\bL(x) \in_R \bX_L}\paren{D(\bL(x)) \leq \frac12} &= \Pr_{x \in_R [C]^P}\paren{\text{$x$ is bad} \mid \bL(x) \in \bX_L} \notag \\
		&= \frac{\Pr_{x \in_R [C]^P}\paren{\text{$x$ is bad} \wedge \bL(x) \in \bX_L}}{\Pr_{x \in_R [C]^P}\paren{\bL(x) \in \bX_L}}. \label{eq:clm2}
	\end{align}
	The denominator in the RHS is $\nu$ by definition. We now bound the numerator. 
	\begin{align*}
		\Pr_{x \in_R [C]^P}\paren{\text{$x$ is bad} \wedge \bL(x) \in \bX_L} &= \sum_{p=2}^{P+1} \Pr_{x \in_R [C]^P}\paren{\text{$x$ is $p$-bad} \wedge \bL(x) \in \bX_L} \tag{as a bad $x$ is $p$-bad for a unique $p$} \\
		&= \sum_{p=2}^{P+1} \Pr_{x \in_R [C]^P}\paren{\text{$x$ is $p$-bad}} \cdot \Pr_{x \in_R [C]^P}\paren{\bL(x) \in \bX_L \mid \text{$x$ is $p$-bad}} \\
		&\leq \sum_{p=2}^{P+1} \Pr_{x \in_R [C]^P}\paren{\text{$x$ is $p$-bad}} \cdot \frac12 \tag{explained below} \\
		&= \frac{1}{2} \Pr_{x \in_R [C]^P}\paren{\text{$x$ is bad}} \tag{again, since $p$-bad events partition the bad event}. 
	\end{align*}
	The inequality above holds because conditioning on $x$ being $p$-bad requires fixing $x_{<p}$, and so by sampling the remaining coordinates and the definition of $p$-bad, we know $\bL(x)$ belongs to $\bX_L$ with probability at most $1/2$. 
	
	Finally, we can upper bound the probability that $x$ is bad (rather crudely) as follows: 
	\begin{align*}
		\nu &= \Pr_{x \in_R [C]^P}\paren{\bL(x) \in \bX_L} = \Pr_{x \in_R [C]^P}\paren{\bL(x) \in \bX_L \wedge \text{$x$ is bad}} + \Pr_{x \in_R [C]^P}\paren{\bL(x) \in \bX_L \wedge \text{$x$ is not bad}} \\
		&\leq \frac12 \cdot \Pr_{x \in_R [C]^P}\paren{\text{$x$ is bad}} + \Pr_{x \in_R [C]^P}\paren{\text{$x$ is not bad}} \tag{by the previous equation} \\
		&\leq \frac12 \cdot \Pr_{x \in_R [C]^P}\paren{\text{$x$ is bad}} + 1 - \Pr_{x \in_R [C]^P}\paren{\text{$x$ is bad}}.
	\end{align*}
	Therefore, we have, 
	\[
		\Pr_{x \in_R [C]^P}\paren{\text{$x$ is bad}} \leq 2 \cdot \paren{1-\nu}. 
	\]
	Plugging these bounds in~\Cref{eq:clm2} implies that 
	\[
		\Pr_{\bL(x) \in_R \bX_L}\paren{D(\bL(x)) \leq \frac12} \leq \frac{1-\nu}{\nu}, 
	\]
	concluding the proof. \Qed{clm:2}
	
\end{proof}

The same exact proof as~\Cref{clm:2} also holds for vertices in $\bR$ and $\bX_R$ by symmetry. Plugging in these bounds in~\Cref{eq:one-side}, we have, 
\[
	1 \leq \Pr_{\bL(x) \in_R \bX_L}\paren{D(\bL(x)) \leq \frac12} + \Pr_{\bR(y) \in_R \bX_R}\paren{D(\bR(y)) \leq \frac12} \leq 2 \cdot \frac{1-\nu}{\nu},
\]
which in turn implies that $\nu \leq  2/3$, concluding the proof of~\Cref{thm:value-2/3}. 
\end{proof}

\end{document}